\newtheorem{theorem}{Theorem}
\newtheorem{corollary}{Corollary}
\newtheorem{remark}{Remark}
\def\msP{ \mathsf{P}}
\newcommand{\HTemp}{\mathcal{H}}
\newcommand{\HH}{\mathcal{H}_0}
\newcommand{\HHH}{\mathcal{H}_1}
\newcommand{\Pf}{ \text{P}_{\text{fa}} }
\newcommand{\Pd}{ \text{P}_{\text{d}} }
\newcommand{\Pm}{ \text{P}_{\text{m}} }
\newcommand{\Pfl}{ \text{P}_{\text{fa,local}} }
\newcommand{\Pdl}{ \text{P}_{\text{d,local}} }
\newcommand{\g}{ \text{g} }
\newcommand{\GG}{ \text{G} }
\newcommand{\G}{ \text{G} }
\newcommand{\MM}{ \mathcal{X}'}
\newcommand{\M}{ \mathcal{X}}
\newcommand{\E}{\mathbb{E}}
\newcommand{\Po}{\mathsf{Poisson}}
\newcommand{\Ex}{\mathsf{exp}}
\newcommand{\LLRset}{\mathcal{S}}
\newcommand{\argmax}{\arg\!\max}
\begin{document}

\title{Cooperative Abnormality Detection via Diffusive Molecular Communications}
\author{Reza Mosayebi$^\dagger$, Vahid Jamali$^{\ddagger}$, Nafiseh Ghoroghchian$^\dagger$, Robert Schober$^{\ddagger}$, Masoumeh Nasiri-Kenari$^\dagger$, and Mahdieh Mehrabi$^\dagger$ \\
   $^\dagger$Sharif University of Technology, Tehran, Iran\\
     $^{\ddagger}$University of Erlangen-Nuremberg, Erlangen, Germany}

\maketitle

\vspace{-1.5cm}

\begin{abstract}
In this paper, we consider abnormality detection via diffusive molecular communications (MCs) for a network consisting of several sensors and a fusion center (FC). If a sensor detects an abnormality, it injects into the medium a number of molecules which is proportional to the sensed value. Two transmission schemes for releasing molecules into the medium are considered. In the first scheme, referred to as DTM, each sensor releases a different type of molecule, whereas in the second scheme, referred to as STM, all sensors release the same type of molecule. The  molecules released by the sensors  propagate through the MC channel and some may reach the FC where the final decision regarding whether or not an abnormality has occurred is made.  We derive the optimal decision rules for both DTM and STM. However, the optimal detectors entail high computational complexity as log-likelihood ratios (LLRs) have to be computed. To overcome this issue, we show that the optimal decision rule for STM can be transformed  into an equivalent low-complexity decision rule. Since a similar transformation is not possible for DTM, we propose simple low-complexity sub-optimal detectors based on different approximations of the LLR. The proposed low-complexity detectors are more suitable for practical MC systems than the original complex optimal decision rule, particularly when the FC is a nano-machine with limited computational capabilities. Furthermore, we analyze the performance of the proposed detectors in terms of their false alarm and missed detection probabilities. Simulation results verify our analytical derivations and reveal interesting insights regarding the trade-off between complexity and performance of the proposed detectors and the considered DTM and STM~schemes.
\end{abstract}

\begin{IEEEkeywords}
Molecular communication, abnormality detection, optimum detector, LLR approximation, asymptotic behavior.
\end{IEEEkeywords}

\section{Introduction}

The progress in the design of nano-scale machines over the past decade has motivated researchers to study the concept of nano-communications. Inspired by biological systems, diffusion-based molecular communication (MC) systems  have been proposed as a potential solution for communication in nano-networks where molecules are used as information carriers  \cite{MC_Book,CellBio}. Nano-networks are envisioned to facilitate revolutionary applications in areas such as biological engineering, healthcare, and environmental monitoring~\cite{Akyl_MCNet}.

\subsection{Motivation}
In recent years, there has been a significant amount of work on various aspects of MC systems, including transmitter and receiver design  \cite{Modulation_Akyl,Reza_Receiver,Vahid_NanoCOM,Arman_Receiver}, multiple access protocols \cite{MacBrRelay}, and network layer issues \cite{NetLayer}. However, the problem of \textit{abnormality detection}, which is one of the key challenges in many of the applications envisioned for nano-networks such as environmental and health monitoring and disease diagnosis, has not been fully investigated yet. For example, to enable smart drug delivery, first an abnormality has to be detected and its progress has to be monitored \cite{Drug}. Then, depending on the condition of the target site, a drug can be released at an appropriate rate. This motivates the investigation of abnormality detection at micro-scale using MCs.

\subsection{Prior Work}

Abnormality detection has been extensively studied in different fields, see e.g. \cite{Poor_Book_Detection,Survey_Anomaly}. In this context, abnormalities are also referred to as anomalies, outliers, exceptions, aberrations, surprises, and peculiarities in various areas of application including failure detection in computer science \cite{Computer_Failor}, fraud detection for credit cards \cite{Fraud_Detection}, and the segmentation of signals in biomedical applications \cite{Bio_Detection}. In most applications, abnormality detection is crucial and requires a high degree of accuracy. A widely-adopted strategy to increase the detection accuracy is through cooperative sensing where several distributed sensors send their sensing results to a common fusion center (FC) which makes the final decision \cite{Survey_Sensing,Sensing_Vahid}. 

Abnormality detection via MC introduces certain new challenges which are not encountered in other fields. In particular, for collaborative abnormality detection in MC applications, the communication between the sensors and the FC is challenging due the inherent randomness of the MC channel. Very few prior works have considered this problem.
Recently in \cite{Adam_Sensing_MC}, a cooperative MC system has been considered, where one transmitter and several receivers send their local \emph{hard decisions} about a transmitted bit to an FC which makes the final decision. However, \cite{Adam_Sensing_MC} does not consider the abnormality detection problem in particular. In another recent work \cite{Lahouti_Detection}, the abnormality detection problem in MC sensor networks is studied and the \emph{sub-optimal OR} fusion rule  is employed to combine the observations received at the FC based on \textit{hard decisions} made at the sensors. In \cite{Lahouti_Detection}, it is assumed that all sensors employ the same type of molecules and the reporting channels, i.e., the sensor-FC channels, are modeled as \textit{additive white Gaussian noise} (AWGN) channels. In the next subsection, we discuss in detail how this paper expands  \cite{Lahouti_Detection} in several important directions, e.g., by allowing soft decisions at the sensors, making more realistic assumptions regarding the MC channel, deriving optimal detectors, and developing low-complexity sub-optimal detectors.

\subsection{Contributions}

We consider collaborative abnormality detection where multiple sensors sense a surface, e.g. an area of tissue, and send their \emph{soft} noisy  sensing values to an FC via diffusion-based MC. We consider two different transmission schemes where the sensors employ different types of molecules (DTM) and the same type of molecule (STM), respectively. The considered DTM and STM schemes provide a trade-off between complexity and performance. Moreover, we model the diffusion channel as a Poisson channel with an arbitrary memory length, i.e., inter-symbol interference (ISI) is present and we assume that the environmental background noise is also Poisson distributed as in \cite{Reza_Receiver,Yilmaz_Poiss,Hamid_letter}. We note that the Poisson model is more realistic for MC systems with molecule counting receivers than the AWGN model assumed in \cite{Lahouti_Detection}, see \cite{Hamid_letter}. 

For the considered MC system, we first derive optimal fusion rules for both DTM and STM. We note that the optimal detectors entail high computational complexity as log-likelihood ratios (LLRs) have to be computed. To overcome this issue, we show that for STM, the optimal decision rule can be transformed  into an equivalent low-complexity decision rule. Since a similar transformation is not possible for DTM, we propose several simple low-complexity sub-optimal detectors based on different approximations of the LLR. The proposed low-complexity detectors are more suitable for practical MC systems than the complex optimal decision rule, particularly when the FC is a  nano-machine with limited computational capabilities. Furthermore, we propose an analytical approach for numerical evaluation of the performance of the proposed optimal and sub-optimal detectors in terms of their false alarm and missed detection probabilities. In addition, the performance of some of the detectors is analyzed in closed form. We further provide asymptotic performance bounds for the presented detectors and derive approximate error exponents for a large number of sensors. These bounds allow us to compare the asymptotic performance of different detectors. Simulation results verify our analytical derivations and provide interesting insights regarding the trade-off between complexity and performance for the proposed optimal and sub-optimal detectors and for the considered DTM and STM schemes.

\subsection{Organization}
The remainder of this paper is organized as follows. In Section II, we describe the system model, the sensing model, and the reporting channel model. In Sections III, we derive optimal detectors for both the DTM and STM reporting schemes, whereas, in Section IV, several sub-optimal low-complexity detectors for DTM are proposed. The performance analysis  and asymptotic performance bounds for the proposed schemes are presented in Section V. Finally, Section VI provides extensive numerical performance results and comparisons, and Section VII concludes the paper.

\section{System Model}\label{sysmod}
 We consider a hypothesis testing problem for a system consisting of $M$ identical sensors, each monitoring a part of a target, e.g., an area of tissue, and an FC, see Fig. \ref{fig.general_SN_model}. Let $\HH$ and $\HHH$ denote the normal and abnormal hypotheses, respectively. The goal is to decide at the FC whether the normal or the abnormal hypothesis is true based on the observations received from the sensors. We describe the sensing model, the reporting channel model, and  the FC in Sections \ref{IA}, \ref{IB}, and \ref{IC}, respectively.

\subsection{Sensing Model}\label{IA}

We adopt a general and abstract model for the sensing process. In particular, let $X^{\{m\}}\in[0,1]$ be a random variable (RV) which models the sensed value at the $m$-th sensor and let $x_m$ be a realization of RV $X^{\{m\}}$. Hereby, small and large values of $x_m$ are used to indicate that the sensing observation at sensor $m$ leans towards hypotheses $\HH$ and $\HHH$, respectively. In fact, for ideal (noise-free) sensors, we have $x_m=0$ and $x_m=1$ for hypotheses $\HH$ and $\HHH$, respectively; however,  for non-ideal (noisy) sensors, we have in general $x_m\in[0,1]$ for both hypotheses. Due to practical considerations, we assume $x_m\in\mathcal{X}= \{0,1/(L-1), \dots,(L-2)/(L-1), 1\}$ where $\mathcal{X}$ is the set of $L$ possible sensed values at the sensors. Given each hypothesis, RVs $X^{\{m\}},\,\,\forall m$, are assumed to be independent and identically distributed (i.i.d.), i.e., the sensors are assumed to be spatially uncorrelated. In particular, we mathematically model $X^{\{m\}}$ as
\begin{align} \label{z} 
X^{\{m\}}\sim
\begin{cases}
\g_0(x_m),\,\, &\text{under hypothesis $\HH$} \\
\g_1(x_m),\,\, &\text{under hypothesis $\HHH$}
\end{cases}
\end{align}
where $\g_0(\cdot)$ and $\g_1(\cdot)$ denote the probability distribution functions (PDFs) of $x_m,\,\,\forall m$,  under hypotheses $\HH$ and $\HHH$, respectively. For ideal sensors, we have $\g_0(x_m)=\delta(x_m)$ and $\g_1(x_m)=\delta(x_m-1)$ where $\delta(\cdot)$ denotes the Dirac delta function. However, for practical sensors, we expect that the PDFs $\g_0(x_m)$ and $\g_1(x_m)$ are decreasing and increasing functions of $x_m$, respectively. As a special case, for $L=2$,  which corresponds to a hard decision at the sensor, the probabilities of false alarm and missed detection at each sensor are denoted by $p_0$ and $p_1$, respectively. That is $\g_0(0)=1-p_0$, $\g_0(1)=p_0$ and $\g_1(0)=p_1$, $\g_1(1)=1-p_1$. 

\subsection{Reporting Model}\label{IB}

We consider a time-slotted transmission with time slots of duration $T$, and a reporting period of $[0, NT]$ where $N$ is the number of time slots. We assume that, after sensing, each sensor releases $x_mA^{\max}$ molecules into the environment at the beginning of each of the $N$ time slots within the reporting period\footnote{Practical sensors may not be able to release a large number of molecules at once due to a limited molecule production rate. To cope with this issue, instead of releasing a large number of molecules in one time slot, we assume that each sensor releases a small number of molecules in \textit{multiple} time slots.}. Here, $A^{\max}$ denotes the maximum number of molecules that the sensors are able to release into the channel. We consider two different reporting schemes depending on the type of molecule released by the sensors, namely DTM and STM, see Fig.~\ref{fig.general_SN_model}. In DTM, each sensor releases a different type of molecule, whereas in STM, all sensors release the same type of molecule. 


\begin{figure}[t]
\centering 
\includegraphics[scale=0.8]{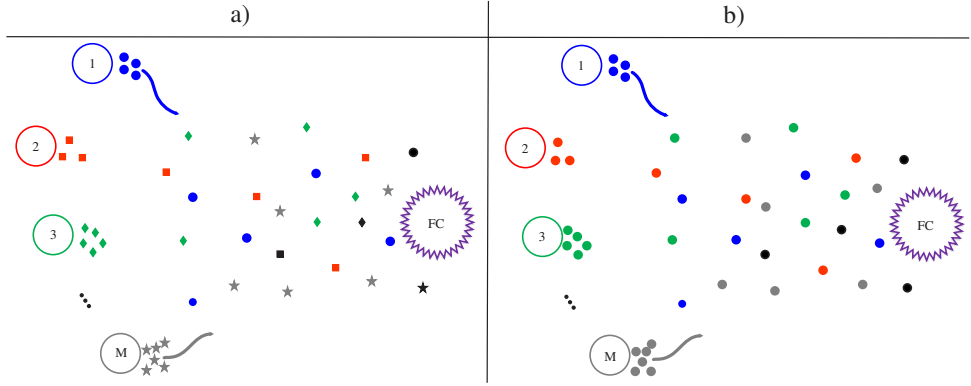} 
\vspace{-0.5cm}
\caption{The considered sensor network consisting of $M$ sensors and an FC for a) the DTM scheme and b) the STM scheme. Shape and color of the molecules represent the type of molecule and the sensor which releases the molecule, respectively. The environmental noise molecules are represented by black circles. }
\label{fig.general_SN_model}
\vspace{-0.3cm}
\end{figure}

The molecules released by the sensors propagate through the channel and are observed at the FC. We assume that a molecule is absorbed at the FC when it hits the surface of the FC. For simplicity of analysis, we assume that the channels between the sensors and the FC are statistically identical, i.e., the distances between all sensors and the FC are identical. In particular, let $h_k$ denote the hitting probability, i.e., the probability that the molecule released in the current time slot hits the receiver during the $k$-th next time slot. The hitting probability $h_k$ depends on the communication medium \cite{Yilmaz_Absorbing}. For instance, for a general three-dimensional environment with a point-source transmitter located at the origin and having a distance of $r_1$ from the center of a spherical receiver with radius $r_2$, the hitting probability is given by \cite{Yilmaz_Absorbing}
\begin{align}
h_k  = \begin{cases}
\frac{r_2}{r_1} \text{erfc} \left(  \frac{r_1-r_2}{\sqrt{4DT}} \right), \quad \text{if}\,\,k=0 \\
\frac{r_2}{r_1} \left(  \text{erfc} \left( \frac{r_1-r_2}{\sqrt{4D(k+1)T}} \right)-\text{erfc} \left( \frac{r_1-r_2}{\sqrt{4DkT}} \right)    \right), \quad \text{if}~ k\geq 1,
\end{cases}
\end{align}
where $D$ is the diffusion coefficient of the transmitted molecule and $\text{erfc}(x)= \frac{2}{\sqrt{\pi}} \int_{x}^{\infty} \text{e}^{ -t^2}\text{d}t$ is the complementary error function. For general MC channels, the hitting probabilities $h_k,\,\,\forall k$, can be estimated  using  training sequence-based channel estimators \cite{Vahid_CSI}.

Let $Y_n^{\{m\}}$ be a RV which models the number of molecules of the type that the $m$-th sensor employs observed at the FC in the $n$-th time slot and let $y_n^{\{m\}}$ be a realization of RV $Y_n^{\{m\}}$. Therefore, the $y_n^{\{m\}}$, $m \in \{1, \ldots, M \}$, constitute the received signals at the FC for DTM. In contrast, for STM, the FC is not able to distinguish between the molecules released by different sensors as all sensors employ the same type of molecule. In this case, the total number of molecules absorbed at the FC in the $n$-th time slot is modelled by RV $Y_n=\sum_{m=1}^M Y_{n}^{\{m\}}$. Hereby, we denote a particular realization of RV $Y_n$ by $y_n$ which constitutes the received signal at the FC. Assuming synchronous transmission \cite{Vahid_Sync}, the input-output relationships of the channels between the sensors and the FC are modeled as independent Poisson channels with additive Poisson noise \cite{Reza_Receiver,Yilmaz_Poiss,Reza_ITA}, i.e., 
\begin{align}\label{Eq:Yn}
\begin{cases}
Y_n^{\{m\}} \sim \Po\left(J_m + \sum_{k=0}^{n}h_k x_m A^{\max}\right), \quad m \in \{ 1, \cdots, M\}, &\text{for DTM} \\
Y_n \sim \Po\left(J + \sum_{m=1}^M\sum_{k=0}^{n}h_k x_m A^{\max}\right), \quad &\text{for STM}
\end{cases}
\end{align}
where $J_m$ and $J$ are the expected numbers of environmental  noise molecules observed at the FC in each time slot. We assume that the expected number of noise molecules is identical for all types of signaling molecules employed, i.e., $J_m=J,\,\,\forall m$, holds. From (\ref{Eq:Yn}), we observe that both $Y_n^{\{m\}}$ and $Y_n$ comprise  three independent terms: \textit{i)} $J$, the noise term, \textit{ii)} $h_0 x_m A^{\max}$, the expected number of received molecules due to the current  transmission, and \textit{iii)} $\sum_{k=1}^{n}h_k x_mA^{\max}$, the expected number of received molecules due to previous transmissions. 

The term $ A_n \triangleq \sum_{k=0}^{n}h_n A^{\max}$ increases with increasing $n$ and saturates to constant $A \triangleq \sum_{k=0}^{\infty}h_k A^{\max}$ for $n \rightarrow \infty$. The speed of convergence depends on the environment, i.e., the values of the $h_k,\,\,\forall k$. If the channel has a finite memory of $\bar{k}T$, i.e., $h_{k}=0,\,\,\forall k\geq\bar{k}$, the transient effect vanishes after $\bar{k}$ time slots, and then, the expected number of molecules received in each time slot will be constant, i.e., $A_n = A,\,\, \forall n > \bar{k}$. The transient period during $0 \leq n \leq \bar{k}$, adds a non-uniformity to the hypothesis testing problem. However, for a channel memory $\bar{k}T$ much less than the reporting period of $NT$, the transient period can be neglected. In fact, there are several options for mitigating the effect of the transient period: \textit{i)} choosing $N\gg\bar{k}$ so that we can neglect the first $\bar{k}$ time slots without compromising the performance, \textit{ii)} choosing $T$ sufficiently large such that $h_k \approx 0,~ \forall k\geq 1$, and \textit{iii)} employing enzymes \cite{Adam_Enzyme}  to speed up the decaying of the channel impulse response (CIR)  as a function of time, see \cite[Section 2]{Vahid_NanoCOM} for further justification.
Using either of these approaches,   the \textit{expected} number of  molecules received at the FC from a given sensor will be constant in each time slot. 
Thus, assuming that observations at the FC in different time slots are identically distributed, for DTM, the probability distribution  of $(Y^{\{m\}}_1, \dots, Y^{\{m\}}_N)$  is given by
\begin{align}\label{Eq:IndSamp}
\msP \left( {Y}^{\{m\}}_{1}= {y}^{\{m\}}_{1}, \dots, {Y}^{\{m\}}_{N}={y}^{\{m\}}_{N}  | \HTemp_i \right)
= \prod_{n=1}^{N} \msP \left( {Y}^{\{m\}}_{n}={y}^{\{m\}}_{n}| \HTemp_i \right),\qquad i\in\{0,1\},
\end{align}
where $\msP(\cdot)$ denotes probability. Eq. (\ref{Eq:IndSamp}) also holds  for STM  after removing superscript $\{m\}$.

\subsection{Fusion Center}\label{IC}

The FC is assumed to be a node with sufficient processing capability that is able to make decisions based on the molecules received from the sensors. A promising approach for implementing such a node is using engineered bacteria \cite{Engineered_Bacteria} where the bacteria can be designed to individually detect and count different types of molecules. Such engineered bacteria may also be able to perform simple processing tasks. To perform more sophisticated processing, one approach is to employ a synthetic/electronic nano-machine with sufficient processing capability to aid the engineered bacteria.  Alternatively, for applications where a nano-machine only collects observations and forwards them to an external processing unit which serves as the FC outside the MC environment, high
computational complexity  may be affordable. This case may apply e.g. in health monitoring when
a computer outside the body may be available for offline processing~\cite{Vahid_CSI}.

\section{Optimal Detector Design}\label{OptimalDetector} 

In this section, we derive the optimum detection rules for both DTM and STM. Let $\vec{Y}$ denote an RV vector modelling the observation vector which contains the numbers of received molecules during all time slots, i.e.,  $Y_n^{\{m\}},\,\,\forall n,m$ for DTM and $Y_n,\,\,\forall n$ for STM, and let $\vec{y}$ be a realization of RV $\vec{Y}$. To derive the optimum decision rule, we must compare the LLR with a threshold denoted by $\gamma$ \cite{Poor_Book_Detection}. Correspondingly, the decision of the optimum detector can be characterized~as
\begin{align} \label{mMAP_d4}
\text{d}_{\text{opt}}=\left\{ 
\begin{array}{l l}
0, & \,\, \text{if $~ \overline{\mathsf{LLR}}(\vec{y}) \leqslant \gamma$} \\
1, & \,\, \text{otherwise} \\
\end{array} \right.
\end{align} 
where $\text{d}_{\text{opt}}=i$ means that the detector selects hypothesis $\HTemp_i,\,\,i=0,1$. In (\ref{mMAP_d4}), $\overline{\mathsf{LLR}}(\vec{y})$ is given by
\begin{align} \label{LLR}
\overline{\mathsf{LLR}}(\vec{y}) = \mathsf{log}\left(\frac{\msP \left( \vec{Y}=\vec{y} | \HHH\right)}{\msP \left( \vec{Y}=\vec{y} | \HH \right)}\right).
\end{align}
In the following, we derive the LLR for the DTM and STM reporting schemes, respectively.

\subsection{DTM Reporting}\label{DO}
 To calculate the LLR in (\ref{LLR}), we have to first derive the conditional distributions of the received molecules at the FC, i.e., $\msP \left( \vec{Y}=\vec{y} | \mathcal{H}_i\right),\,\,i=0,1$. In particular, we have
\begin{align}
\msP \left(\vec{Y}=\vec{y} | \mathcal{H}_i \right) = \prod_{m=1}^{M} \msP \left( \vec{Y}^{\{m\}}=\vec{y}^{\{m\}} | \mathcal{H}_i \right), \quad i=0,1,
\end{align}
where $\vec{Y}^{\{m\}} = [ Y_{1}^{\{m\}}, \dots, Y_{N}^{\{m\}} ]^T$ denotes an RV modelling the DTM observation originating from sensor $m$ at the FC, i.e., the numbers of the molecules received from the $m$-th sensor in the $N$ time slots of the reporting period, and  $\vec{y}^{\{m\}}= [ y_{1}^{\{m\}}, \dots, y_{N}^{\{m\}} ]^T$ denotes a particular realization of $\vec{Y}^{\{m\}}$.
Moreover, for $i=0,1$ and $m=1,\ldots, M$, $\msP \left( \vec{Y}^{\{m\}} | \mathcal{H}_i \right)$ is obtained as
\begin{align}\label{Hi-new}
\msP \left( \vec{Y}^{\{m\}} | \mathcal{H}_i \right) &= \sum_{x_m \in \mathcal{X}} \msP \left( \vec{Y}^{\{m\}}=\vec{y}^{\{m\}} | X^{\{m\}}=x_m,  \mathcal{H}_i \right)
\msP \left(X^{\{m\}}=x_m | \mathcal{H}_i \right) \nonumber \\
&=\sum_{x_m \in \mathcal{X}} \g_i(x_m) \msP \left( \vec{Y}^{\{m\}}=\vec{y}^{\{m\}} | X^{\{m\}}=x_m, \mathcal{H}_i \right) \nonumber \\
&\overset{(a)}{=}\sum_{x_m \in \mathcal{X}} \g_i(x_m) \prod_{n=1}^{N} \msP \left(Y^{\{m\}}_n=y^{\{m\}}_n | X^{\{m\}}=x_m, \mathcal{H}_i \right) 
\nonumber \\
&\overset{(b)}{=}\left( \sum_{x_m \in \mathcal{X}} \g_i(x_m) \Ex \left(-N(x_mA+J)\right) \left(x_mA+J\right)^{\sum_{n=1}^{N}y^{\{m\}}_n} \right) /  \prod_{n=1}^{N}y^{\{m\}}_n!, \,\,
\end{align}
where equality $(a)$ follows from the fact that the observations in different time slots, i.e., $Y_{n}^{\{m\}},\,\,\forall n$, are independent, and 
equality $(b)$ is obtained by substituting $\msP \big( \vec{Y}^{\{m\}}=\vec{y}^{\{m\}} | X^{\{m\}}=\\ x_m, \mathcal{H}_i \big)$ by the Poisson distribution with mean $x_mA+J$. Since the sensors are assumed to be independent, the total LLR is the sum of the LLRs of all sensors \cite{Poor_Book_Detection}, i.e., we have
\begin{align}\label{TotLLR}
\overline{\mathsf{LLR}}(\vec{y}) = \sum_{m=1}^{M} \mathsf{log}\left(\frac{\msP \left( \vec{Y}^{\{m\}} | \HHH\right)}{\msP \left( \vec{Y}^{\{m\}} | \HH \right)}\right) 
\triangleq \sum_{m=1}^{M} \mathsf{LLR}(\vec{y}^{\{m\}}),
\end{align}
where 
\begin{align}\label{llr_loc}
\mathsf{LLR}(\vec{y}^{\{m\}}) = \mathsf{log} \left( \frac{ \sum_{x_m \in \mathcal{X}} \g_1(x_m) \Ex(-N(x_mA+J)) (x_mA+J)^{\sum_{n=1}^{N}y^{\{m\}}_n} }
{ \sum_{x_m \in \mathcal{X}} \g_0(x_m) \Ex(-N(x_mA+J)) (x_mA+J)^{\sum_{n=1}^{N}y^{\{m\}}_n} } \right).
\end{align}
For the special case of $L=2$, i.e., hard decisions at the sensors, the total LLR can be simplified to
\begin{align}\label{TotLLRH}
\overline{\mathsf{LLR}}(\vec{y}) = \sum_{m=1}^{M} \mathsf{log} \left( \frac{ (1-p_1) \Ex(-N(A+J)) (A+J)^{\sum_{n=1}^{N}y^{\{m\}}_n} + p_1 \Ex(-NJ) J^{\sum_{n=1}^{N}y^{\{m\}}_n}}
{p_0 \Ex(-N(A+J)) (A+J)^{\sum_{n=1}^{N}y^{\{m\}}_n} + (1-p_0) \Ex(-NJ) J^{\sum_{i=n}^{N}y^{\{m\}}_n}} \right) .
\end{align} 
%
Using (\ref{TotLLR}), the optimal decision can be obtained from (\ref{mMAP_d4}). 

The following theorem reveals  the monotonicity of  $\mathsf{LLR}(\vec{y}^{\{m\}})$ with respect to $\sigma_{y}^{\{m\}}=\sum_{n=1}^{N}y^{\{m\}}_n$ under some mild conditions. In Section~IV, we exploit this property to present a low-complexity two-stage detector for DTM. 

\begin{theorem}\label{Th2}
For each sensor, $\mathsf{LLR}(\vec{y}^{\{m\}})$ is a monotonically increasing function with respect to the sum of samples, $\sigma_{y}^{\{m\}}=\sum_{n=1}^{N}y_n^{\{m\}}$, under the following mild conditions
\begin{align}
\frac{\g_1\left(x_m\right)}{\g_1\left(x_m'\right)}\geq \frac{\g_0\left(x_m\right)}{\g_0\left(x_m'\right)}, \quad \forall x_m, x_m'\in{\M}, x_m'<x_m.
\end{align}
\end{theorem}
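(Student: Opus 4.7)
}

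The plan is to rewrite $\mathsf{LLR}(\vec{y}^{\{m\}})$ as a function of the scalar $s=\sigma_y^{\{m\}}=\sum_{n=1}^{N}y_n^{\{m\}}$ only, and then verify that this scalar function is monotonically nondecreasing in $s$ by a pairwise-cancellation argument. Observe from (\ref{llr_loc}) that the dependence on $\vec{y}^{\{m\}}$ enters only through $s$, so if we define, for $i\in\{0,1\}$,
\begin{align*}
f_i(s)\;\triangleq\;\sum_{x\in\mathcal{X}} g_i(x)\,\mathsf{exp}\!\bigl(-N(xA+J)\bigr)\,\bigl(xA+J\bigr)^{s},
\end{align*}
then $\mathsf{LLR}(\vec{y}^{\{m\}})=\log\!\bigl(f_1(s)/f_0(s)\bigr)$. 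Since $\log$ is monotone, it suffices to prove that the ratio $f_1(s)/f_0(s)$ is nondecreasing in $s$, which (using positivity of $f_0$, $f_1$) is equivalent to the inequality $f_1(s+1)\,f_0(s)\;\geq\;f_1(s)\,f_0(s+1)$ for every admissible $s$.

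The main step is to verify this inequality. First I would write $f_i(s)=\sum_{x} w_i(x)\,b(x)^{s}$, where $w_i(x)\triangleq g_i(x)\mathsf{exp}(-N(xA+J))$ and $b(x)\triangleq xA+J$. Expanding $f_1(s+1)\,f_0(s)-f_1(s)\,f_0(s+1)$ as a double sum over $(x,x')\in\mathcal{X}\times\mathcal{X}$ and symmetrizing by swapping $x\leftrightarrow x'$ yields
\begin{align*}
f_1(s+1)\,f_0(s)-f_1(s)\,f_0(s+1)
\;=\;\tfrac{1}{2}\!\sum_{x,x'\in\mathcal{X}}\!\bigl[w_1(x)w_0(x')-w_0(x)w_1(x')\bigr]\bigl[b(x)-b(x')\bigr]\,b(x)^{s}b(x')^{s}.
\end{align*}
Each factor $b(x)^{s}b(x')^{s}$ is nonnegative, so the claim reduces to showing that the bracketed product $[w_1(x)w_0(x')-w_0(x)w_1(x')][b(x)-b(x')]$ is nonnegative for every pair $(x,x')$.

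This is where the hypothesis of the theorem is used. Since $w_i(x)=g_i(x)\mathsf{exp}(-N(xA+J))$, the common exponential factor in $w_1(x)w_0(x')-w_0(x)w_1(x')$ cancels, leaving $\mathsf{exp}(-N(x+x')A-2NJ)\,[g_1(x)g_0(x')-g_0(x)g_1(x')]$. Assuming (momentarily) $g_0(x),g_0(x')>0$, the bracket rewrites as $g_0(x)g_0(x')\bigl[g_1(x)/g_0(x)-g_1(x')/g_0(x')\bigr]$, and the stated condition is precisely the monotone-likelihood-ratio property: $g_1(x)/g_0(x)$ is nondecreasing in $x$. Consequently, for $x>x'$ we have $g_1(x)g_0(x')-g_0(x)g_1(x')\geq 0$ and $b(x)-b(x')=(x-x')A\geq 0$, so both factors share the same sign; for $x<x'$ both are nonpositive; for $x=x'$ both vanish. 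The boundary cases where $g_0(x)$ or $g_0(x')$ equals zero can be handled separately by inspecting the bracket directly. Hence every summand is nonnegative, the inequality holds, and the monotonicity follows.

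The principal obstacle I anticipate is not the algebra itself but the bookkeeping: one must recognize that the posed condition is exactly the monotone-likelihood-ratio ordering of $g_1$ relative to $g_0$, and one must perform the symmetrization carefully so that the resulting bracket reproduces this ordering and pairs it with the sign of $b(x)-b(x')$. Once these two pieces are aligned, the rest is routine.
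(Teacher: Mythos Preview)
Your proof is correct and follows essentially the same route as the paper's: both expand the relevant difference as a double sum over $(x,x')\in\mathcal{X}\times\mathcal{X}$, symmetrize by swapping $x\leftrightarrow x'$, and then invoke the monotone-likelihood-ratio condition to conclude that every term is nonnegative. The one noteworthy difference is that the paper treats $\sigma_y^{\{m\}}$ as a continuous variable and differentiates the likelihood ratio (producing a factor $\log\!\bigl((x A+J)/(x' A+J)\bigr)$), whereas you compare consecutive integer values directly via $f_1(s{+}1)f_0(s)\geq f_1(s)f_0(s{+}1)$ (producing the factor $b(x)-b(x')$); since these two factors share the same sign pattern, the arguments coincide from that point on, and your discrete version avoids the informal continuous-extension step the paper uses.
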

\begin{proof}
The proof is given in Appendix \ref{Th2:Proof}. 
\end{proof}
\begin{remark}
For the special case of $L=2$, i.e., hard decisions at the sensors, the only values to be checked are $x_m=1$ and $x_m'=0$. Hence, the condition in Theorem~\ref{Th2} simplifies to
\begin{align}
\frac{1-p_1}{p_1} \geq \frac{p_0}{1-p_0},
\end{align}
resulting in the condition $p_0+p_1\leq1$, which always holds.
\end{remark}
Assuming the probability functions $\g_0(x_m)$ and $\g_1(x_m)$ are decreasing and increasing functions of $x_m$, respectively, the condition in Theorem \ref{Th2} always holds. In fact, it is expected that $\g_0(x)$ and $\g_1(x)$ have the aforementioned properties for practical sensors. For instance, under hypothesis $\HHH$, a sensor with a reasonable performance would be expected to yield $\g_1(x)>\g_1(x')$ for $x>x'$.

\subsection{STM Reporting}

In this subsection, we focus on sensors employing STM reporting, i.e., all sensors employ the same type of molecule for signaling their observations to the FC. Therefore, the FC cannot distinguish between the molecules released by different sensors. In this case, the conditional probability distribution of the received molecules at the FC can be written as
\begin{align}\label{HiSUM}
\msP\left(\vec{Y}=\vec{y} | \mathcal{H}_i\right) 
&= \sum_{x\in \mathcal{X}'}\msP \left(\vec{Y}=\vec{y} \big|X= x, \mathcal{H}_i \right) \msP\left(X= x | \HTemp_{i}\right), \quad i=0,1,
\end{align}
where $X=\sum_{m=1}^{M}X^{\{m\}}$ is the sum of all sensing values. The possible values, $x$, of $X$, lie in the set $\mathcal{X}' \triangleq \{l/(L-1): l=0, 1, \cdots, M(L-1) \}$, i.e., $x\in \mathcal{X}'$. The probability $\msP \left(\vec{Y}=\vec{y} \big|X= x, \mathcal{H}_i \right)$ is obtained as 
\begin{align}\label{ConProb}
 & \msP \left(\vec{Y}=\vec{y} \big|X= x, \mathcal{H}_i \right) 
 \overset{(a)}{=}  \prod_{n=1}^{N} \msP \left(Y_n=y_n \big|X=x,\mathcal{H}_i \right)  \nonumber \\
 &  \overset{(b)}{=}  \prod_{n=1}^{N} \frac{\Ex(-(xA+J))(xA+J)^{y_n}} {y_n !} 
   =\frac{\Ex(-N(xA+J))(xA+J)^{\sum_{n=1}^{N}y_n}}{\prod_{n=1}^{N}y_n!},  
 \end{align}
 where for equality $(a)$, we exploit the mutual independence of the observations, $Y_n,\,\,\forall n$, and for equality $(b)$, the expression for the Poisson distribution with mean $xA+J$ is substituted for $\msP \big(Y_n=y_n \big|X=x,\mathcal{H}_i \big)$. The probability $\msP\left(X= x | \HTemp_{i}\right)$ can be calculated as
 \begin{align}\label{myconv}
\msP\left(X= x | \HTemp_{i}\right) \triangleq \GG_i(x) \sim \left( \overbrace{\g_i(x_m) \otimes \g_i(x_m) \otimes \cdots \otimes \g_i(x_m)}^{M-1 ~\text{times}}  \right),\quad x_m \in \mathcal{X}, i=0, 1,
\end{align}
where ``$\otimes$" represents the convolution operator. 
Using (\ref{ConProb}) and (\ref{myconv}), we can rewrite (\ref{HiSUM}) as
\begin{align}\label{Hi_STM}
\msP(\vec{Y}=\vec{y} | \mathcal{H}_i) 
= \frac{1}{{\prod_{n=1}^{N}y_n!}} \sum_{x \in \mathcal{X}'} \GG_i(l) {\Ex(-N(xA+J))(xA+J)^{\sum_{n=1}^{N}y_n}}\quad i=0, 1.
\end{align}
Therefore, the LLR for STM can be calculated as
\begin{align}\label{LLR2}
\overline{\mathsf{LLR}}(\vec{y}) = \mathsf{log} \left(  \frac{\sum_{x \in \mathcal{X}'}\GG_1(x) ~{\Ex(-N(xA+J))(xA+J)^{\sum_{n=1}^{N}y_n}}}
{\sum_{x \in \mathcal{X}'} \GG_0(x) ~{\Ex(-N(xA+J))(xA+J)^{\sum_{n=1}^{N}y_n}}}
 \right).
\end{align}
Having the LLR in  (\ref{LLR2}), the optimal decision can be readily obtained from (\ref{mMAP_d4}). In the following corollary, we prove the monotonicity of the above LLR function in $\sum_{n=1}^{N}y_n^{\{m\}}$. Then, based on this property, we propose a low-complexity optimal detector.

\begin{corollary}\label{CorolSTM}
The LLR for STM, $\overline{\mathsf{LLR}}(\vec{y})$, given in (\ref{LLR2}) is a monotonically increasing function with respect to the sum of samples, $\sigma_{y}=\sum_{n=1}^{N}y_n$, under the following condition
\begin{align}
\frac{\G_1\left(x\right)}{\G_1\left(x'\right)}\geq \frac{\G_0\left(x\right)}{\G_0\left(x'\right)}, \quad \forall x, x'\in{\MM}, x'<x.
\end{align}
\end{corollary}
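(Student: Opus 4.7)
The plan is to observe that the STM total-LLR in (\ref{LLR2}) has exactly the same algebraic structure as the per-sensor DTM LLR in (\ref{llr_loc}) that was analyzed in Theorem~\ref{Th2}. Concretely, if one substitutes $g_i(x_m)\to G_i(x)$, $\mathcal{X}\to\mathcal{X}'$, $x_m\to x$, and $\sigma_y^{\{m\}}\to\sigma_y$, then equation (\ref{LLR2}) becomes term-for-term identical to (\ref{llr_loc}). The corollary's hypothesis
\[
\frac{G_1(x)}{G_1(x')}\;\geq\;\frac{G_0(x)}{G_0(x')}\quad\forall\, x,x'\in\mathcal{X}',\ x'<x,
\]
is likewise the image, under this substitution, of the monotone-likelihood-ratio hypothesis in Theorem~\ref{Th2}. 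Hence the proof can be organized simply as a reduction, and the heavy lifting was already done in Appendix~\ref{Th2:Proof}.

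First I would write out both LLR expressions side by side to make the structural equivalence transparent, pointing out that in neither place does the derivation of the LLR's monotonicity use anything specific about $g_i$ other than its non-negativity and the MLR condition, nor about $\mathcal{X}$ other than that it is a finite set of non-negative real numbers. I would then note that $G_i$ indeed inherits non-negativity from (\ref{myconv}) (as it is the $M$-fold convolution of a PDF) and that $\mathcal{X}'$ is a finite set in $[0,M]$, so all structural prerequisites are met.

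The substantive step is the invocation of Theorem~\ref{Th2}. Applying it to the symbol-renamed expression gives that $\overline{\mathsf{LLR}}(\vec{y})$ is monotonically increasing in $\sigma_y$ whenever the MLR condition on $(G_0,G_1)$ over $\mathcal{X}'$ holds, which is precisely the hypothesis of Corollary~\ref{CorolSTM}. Thus the claim follows at once, with no new analytic work required.

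The only place any genuine verification could be needed is confirming that the proof of Theorem~\ref{Th2} does not implicitly rely on the cardinality or specific spacing of $\mathcal{X}$, or on $g_i$ being the marginal (rather than convolutional) sensor PDF; if the appendix uses the MLR condition pairwise and term-wise as expected, nothing in its argument notices the difference between $\mathcal{X}$ and $\mathcal{X}'$ or between $g_i$ and $G_i$. I expect this to be the main, and essentially only, obstacle: a one-line sanity check of the Appendix~\ref{Th2:Proof} argument to confirm that it is agnostic to these labels, after which the corollary is immediate.
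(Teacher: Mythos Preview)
Your proposal is correct and takes essentially the same approach as the paper: the paper's own proof simply states that the argument is similar to that of Theorem~\ref{Th2} and omits the details. Your explicit identification of the substitution $\g_i\to\G_i$, $\mathcal{X}\to\mathcal{X}'$, $x_m\to x$, $\sigma_y^{\{m\}}\to\sigma_y$ and your sanity check that Appendix~\ref{Th2:Proof} uses only non-negativity of the weights and the pairwise MLR condition (and nothing about the cardinality or spacing of the index set) is exactly what is needed to justify that omission.
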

\begin{proof}
The proof is similar to that for Theorem~\ref{Th2}. Therefore, in order to avoid repetition, we omit the detailed proof here.
\end{proof}

Based on Corollary~\ref{CorolSTM}, $\overline{\mathsf{LLR}}(\vec{y})$ is strictly increasing in terms of $\sum_{n=1}^{N}y_n$. Let us denote this mapping by $\overline{\mathsf{LLR}}(\vec{y}) = f(\sum_{n=1}^{N}y_n)$ and its inverse by $f^{-1}(\overline{\mathsf{LLR}}(\vec{y}))=\sum_{n=1}^{N}y_n$. Therefore,  comparing $\overline{\mathsf{LLR}}(\vec{y})$ with threshold $\gamma$, cf. (\ref{mMAP_d4}), is equivalent to comparing $\sum_{n=1}^{N}y_n$ with a new threshold $\gamma_{\text{STM}}=f^{-1}(\gamma)$. This leads to the following equivalent optimal detector
\begin{align} \label{OptSTMsimple}
\text{d}_{\text{opt}}=
\begin{cases} 
0, & \,\, \text{if} \,\, \sum_{n=1}^{N}y_n \leqslant \gamma_{\text{STM}} \\
1, & \,\, \text{otherwise.} \\
\end{cases}
\end{align} 
We note that the above detector is much simpler than that in (\ref{mMAP_d4}) since the computation of the LLR in (\ref{LLR2}) is avoided.

\section{Sub-Optimal Low-Complexity Detectors for DTM}\label{llrapp} 

For STM, we could show that optimal detection can be performed without evaluation of the corresponding LLR in (\ref{LLR2}), which led to the simple optimal detector in (\ref{OptSTMsimple}). However, for DTM, such a simplification does not seem possible and computing the LLR  given in (\ref{TotLLR}) may be computationally too complex for simple nano-machines due to the exponential functions in the numerator and the denominator in (\ref{llr_loc}), which have large dynamic ranges. To illustrate this problem more clearly, we rewrite the optimum $\mathsf{LLR}(\vec{y}^{\{m\}})$ for each sensor $m$ as a function of $\sigma_y^{\{m\}}$ as
\begin{align}\label{newapp}
\mathsf{LLR}_\text{opt}\left(\sigma_y^{\{m\}}\right)=\mathsf{log}\left(\frac{\sum_{x_m \in{\mathcal{X}}}\g_1\left(x_m\right)\text{V}(x_mA+J,\sigma_y^{\{m\}})}{\sum_{x_m \in{\mathcal{X}}}\g_0\left(x_m\right)\text{V}(x_mA+J,\sigma_y^{\{m\}})}\right),
\end{align}
where $\text{V}(a,b) = \Ex{\left(-Na\right)}a^b$. Although the values of $\mathsf{LLR}_\text{opt}\left(\sigma_y^{\{m\}}\right)
$ in (\ref{newapp}) may have a reasonable  range, the term $\text{V}(x_mA+J,\sigma_y^{\{m\}})$ inside the logarithm  can assume very large values if the number of received molecules $\sigma_y^{\{m\}}$ is large. In fact, for computation of the optimal LLR, \textit{we have to first compute all summands} in the numerator and denominator, respectively, then compute the summation for both the numerator and denominator, and \textit{then apply the logarithm} to the ratio of the numerator and denominator. Because of the summations in the numerator and denominator in (\ref{newapp}), the logarithm does not cancel out the exponential functions. This leads to a large dynamic range, especially for large $M$, $N$, and $A$.

In this section, we derive several simple low-complexity sub-optimal detectors for DTM by using appropriate approximations for the LLR. As will be shown in Section~VI, these sub-optimum detectors provide a favorable trade-off between detection performance and complexity, and hence, are more suitable for implementation in nano-machines with limited computational capabilities.

\subsection{Max-Log Approximation}

This approximation, first introduced in \cite{Robert_MaxLog} for classical wireless sensor networks, takes into account only the maximum terms in both the numerator and denominator of $\mathsf{LLR}_{\text{opt}}$ in (\ref{newapp}). That is,
\begin{align}\label{max-log} 
\mathsf{LLR}_{\text{Max-Log}}\left(\sigma_y^{\{m\}}\right)=\mathsf{log} \left( \frac{\displaystyle\max_{x_m \in{\mathcal{X}}}{(\g_1(x_m)\text{V}(x_mA+J,\sigma_y^{\{m\}})})}{\displaystyle\max_{x_m \in{\mathcal{X}}}{(\g_0(x_m)\text{V}(x_mA+J,\sigma_y^{\{m\}})})} \right).
\end{align}
Since the $\mathsf{log}(\cdot)$ function is monotonically increasing, we have $\max_{x}f(x)=\Ex(\max_{x}\mathsf{log} (f(x)))$. Applying this rule to (\ref{max-log}), we obtain
%
\begin{align}\label{max-log_simple} 
\mathsf{LLR}_{\text{Max-Log}}\left(\sigma_y^{\{m\}}\right)&= 
\displaystyle\max_{x_m \in{\mathcal{X}}}\left( {\mathsf{log}\left(\g_1(x_m)\text{V}(x_mA+J,\sigma_y^{\{m\}})\right)}\right) \nonumber \\
&\quad - \displaystyle\max_{x_m \in{\mathcal{X}}}\left( {\mathsf{log}\left(\g_0(x_m)\text{V}(x_mA+J,\sigma_y^{\{m\}})\right)}\right) \nonumber \\
&=\max_{x_m \in{\mathcal{X}}} \left( {\mathsf{log}(\g_1(x_m))+\sigma_y^{\{m\}}\mathsf{log}(x_mA+J)-N(x_mA+J)}\right) \nonumber \\
&\quad -\max_{x_m \in{\mathcal{X}}} \left( {\mathsf{log}(\g_0(x_m))+\sigma_y^{\{m\}}\mathsf{log}(x_mA+J)-N(x_mA+J)}\right).
\end{align}
As can be seen from (\ref{max-log_simple}), unlike the optimal $\mathsf{LLR}_\text{opt}\left(\sigma_y^{\{m\}}\right)$ in (\ref{newapp}), the expression for the sub-optimal $\mathsf{LLR}_{\text{Max-Log}}\left(\sigma_y^{\{m\}}\right)$ is a linear function of $\sigma_y^{\{m\}}$ which facilitates its numerical evaluation. The sub-optimal detector for the max-log approximation is obtained from (\ref{mMAP_d4}) by substituting $\mathsf{LLR}_{\text{Max-Log}}$ for $\mathsf{LLR}\left( \vec{y}^{\{m\}}\right)$ in (\ref{TotLLR}).
\subsection{Maximum Ratio Combining (MRC) Approximation} \label{mrc_label}

For the MRC approximation \cite{MRC_Fusion}, all sensors are assumed to be ideal,  i.e.,  $\g_0(x_m)=\delta(x_m)$ and $\g_1(x_m)=\delta(x_m-1)$. By substituting the ideal sensing distributions into (\ref{newapp}), we obtain
\begin{align}\label{MRC}
\mathsf{LLR}_{\text{MRC}}\left(\sigma_y^{\{m\}}\right)
=\mathsf{log}\left(
\frac{\Ex(-N(A+J)) (A+J)^{\sigma_y^{\{m\}}}}{\Ex(-NJ) J^{\sigma_y^{\{m\}}}}
\right) = -NA+\sigma_y^{\{m\}}\mathsf{log}\left(1+\frac{A}{J}\right).
\end{align}
The sub-optimal $\mathsf{LLR}_{\text{MRC}}\left(\sigma_y^{\{m\}}\right)$ is in fact a linear function in $\sigma_y^{\{m\}}$ which again facilitates its numerical evaluation. In addition, applying the MRC approximation given in (\ref{MRC}) in (\ref{mMAP_d4}), the detector  essentially compares the summation of all molecules received from all sensors in all time slots with a suitable threshold, denoted by $\gamma_{\text{MRC}}$. This leads to the following simple detector
\begin{align} \label{dmrc}
\text{d}_{\text{MRC}}=\left\{ 
\begin{array}{l l}
0, & \,\, \text{if $~ \sum_{m=1}^{M} \sigma_y^{\{m\}} \leqslant \gamma_{\text{MRC}}$} \\
1, & \,\, \text{otherwise} \\
\end{array} \right.
\end{align}
The MRC detector is expected to perform well when the sensors are reliable but the reporting channel is unreliable.

\subsection{Chair-Varshney (CV) Approximation}

For the CV approximation, which was first introduced in \cite{CV_Fusion} for classical sensor networks, instead of substituting the numerator and denominator of the LLR in (\ref{newapp}) with their corresponding maximum terms as in the Max-Log decision rule, we first find the value of $x_m$ in (\ref{newapp}) that maximizes only the Poisson terms in the numerator and denominator denoted by~$\hat{x}_m$.
This leads~to
\begin{align}\label{CV-Simple}
\hat{x}_m = \underset{x_m \in{\mathcal{X}}}{\argmax} {\left(\text{V}\left(x_mA+J,\sigma_y^{\{m\}}\right)\right)}
\overset{(a)}{=} \underset{x_m \in{\mathcal{X}}}{\argmax} {-N(x_mA+J)+\sigma_y^{\{m\}}\mathsf{log}(x_mA+J)},
\end{align}
where for equality $(a)$, we exploited that the $\mathsf{log}(\cdot)$ function is monotonically increasing.
Then, the LLR is approximated as
\begin{align}\label{CV}
\mathsf{LLR}_{\text{CV}}\left(\sigma_{y}^{\{m\}}\right)=\mathsf{log}\left(\frac{\g_1(\hat{x}_m)}{\g_0(\hat{x}_m)}\right).
\end{align}
By substituting the above approximated LLR into (\ref{TotLLR}), the corresponding sub-optimal detector is given by (\ref{mMAP_d4}).  The CV detector is expected to perform well in situations where the reporting channel is reliable but the sensors are unreliable.


\subsection{Two-Stage Detector}\label{ids}
In this part, we consider a simple sub-optimum detector, where the decision is made in two stages. In the first stage, the FC makes a separate decision on the observations received from each sensor.   Then, in the second stage, the global final decision is made using the $M$ individual decisions. Our motivation for considering this sub-optimum detector is that it is simple and it may be more practical when the adopted FC is a biological entity. In particular, for some biological cells, the processing unit of a cell  does not have direct access to the counted molecules. Rather, a preliminary decision may be made by the signaling pathway which connects the receptors to the processing unit of the cell \cite{CellBio}.

For each sensor $m$, the detector makes a local decision by comparing $\mathsf{LLR}(\vec{y}^{\{m\}})$, given in (\ref{llr_loc}), with threshold level $\tau$, that is
\begin{align} \label{mMAP_d111}
\text{d}_{\text{local}}^{\{m\}}=\left\{ 
\begin{array}{l l}
0,\,\, &  \text{if $~ \mathsf{LLR}(\vec{y}^{\{m\}}) \leqslant \tau$} \\
1, &  \text{otherwise.} \\
\end{array} \right.
\end{align} 
Using Theorem~\ref{Th2}, the optimum local decision rule in (\ref{mMAP_d111}) can be further simplified for ease of implementation.
In particular, in Theorem~\ref{Th2}, we showed that $\mathsf{LLR}(\vec{y}^{\{m\}})$ is strictly increasing in terms of $\sum_{n=1}^{N}y^{\{m\}}_n$. That is, $\mathsf{LLR}(\vec{y}^{\{m\}})$ is a monotonic function in $\sum_{n=1}^{N}y^{\{m\}}_n$.
This property implies that, comparing $\sum_{n=1}^{N}y^{\{m\}}_n$ with an appropriate new threshold, denoted by $\gamma_{\text{local}}$, is equivalent to comparing the LLR, $\mathsf{LLR}(\vec{y}^{\{m\}})$, with the original threshold, $\tau$. 
Therefore, the optimal local decision can be simplified to
\begin{align} \label{mMAP_d}
\text{d}_{\text{local}}^{\{m\}}=\left\{ 
\begin{array}{l l}
0 & \quad \text{if $~\sum_{n=1}^{N} y_n^{\{m\}} \leqslant \gamma_{\text{local}}$} \\
1 & \quad \text{otherwise}. \\
\end{array} \right.
\end{align} 

In the second stage, the detector makes the final decision based on the number of sensors that have made a local decision in favor of $\HHH$. If the number of sensors with $\text{d}_{\text{local}}^{\{m\}}=1$ is larger than a new threshold, $\gamma_{\text{global}}$, the final decision is $\HHH$, otherwise it is $\HH$. This leads to
\begin{align} \label{mMAP_dglob}
\text{d}_{\text{final}}=\left\{ 
\begin{array}{l l}
0,\,\, &  \text{if $~\sum_{m=1}^{M} \text{d}_{\text{local}}^{\{m\}} \leqslant \gamma_{\text{global}}$} \\
1, &  \text{otherwise}. \\
\end{array} \right.
\end{align} 
%
We note that the above two-stage detector has been also considered in the context of spectrum sensing in cognitive radio networks and is there referred to as ``hard decision"  with ``$K$-out-of-$M$" decision rule where $K=\gamma_{\text{global}}-1$ \cite{SpectrumSensingSurvey}. Special cases of the $K$-out-of-$M$ decision rule are the ``OR" rule where $K=1$ and the ``AND" rule where $K=M$.

\subsection{Accuracy of Approximation}\label{Complexity}

The proposed sub-optimal LLRs for DTM are computationally much simpler compared to the optimal LLR. However, it is expected that this simplicity is accompanied by a loss in performance. Hence, in the following, we study the accuracy of the proposed LLR approximations for a simple example.
Fig.~\ref{Fig:Accuracy} shows $\mathsf{LLR}_\text{opt}\left(\sigma_{y}^{\{m\}}\right)$ and its approximations, namely $\mathsf{LLR}_\text{Max-Log}\left(\sigma_{y}^{\{m\}}\right)$, $\mathsf{LLR}_\text{MRC}\left(\sigma_{y}^{\{m\}}\right)$, and $\mathsf{LLR}_\text{CV}\left(\sigma_{y}^{\{m\}}\right)$, versus $\sigma_{y}^{\{m\}} = \sum_{n=1}^{N}y_n^{\{m\}}$ for $L=4$, $J=4$, $N=10$, and $A=6$. While the optimum LLR is generally non-linear, all approximated LLRs are piecewise linear. Among the different approximations, the Max-Log approximation follows the optimum LLR most closely. Thus, we expect that the Max-Log detector outperforms the other proposed sub-optimal detectors. More detailed discussions regarding the relative performance of the proposed detectors are provided in Section~VI.

\begin{figure}
\centering
\includegraphics[width=0.6\linewidth]{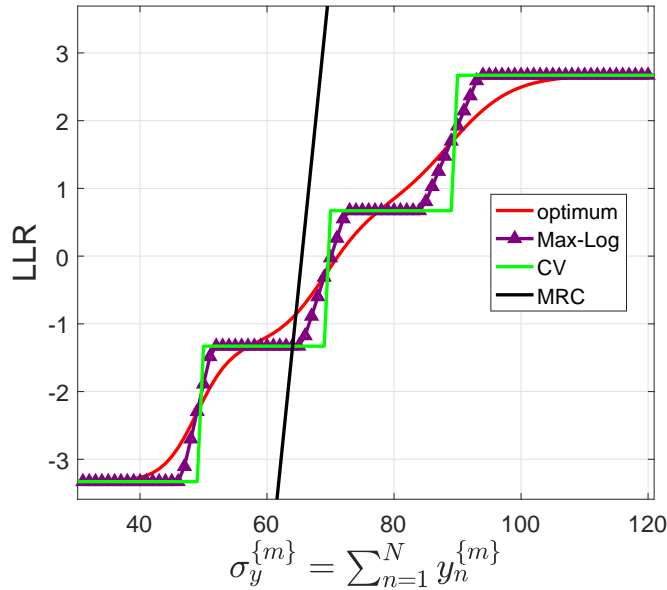} %
\vspace{-0.1cm}
\caption{\footnotesize{LLR for DTM versus $\sigma_y^{\{m\}}=\sum_{n=1}^{N}y_n^{\{m\}}$, for parameters $L=4, J=4$, $N=10$, and $A=6$. }}
\label{Fig:Accuracy}
\vspace{-0.3cm}
\end{figure}

\section{Performance Analysis}\label{sys1p3}
In this section, we first present a numerical approach for evaluation of the performance of DTM and STM.
Moreover, we show that the decision rules of the optimal detector for STM and the sub-optimal MRC and two-stage detectors for DTM allow for a closed-form analytical performance evaluation.
Finally, we present an asymptotic performance bound for DTM and STM.
 
For performance analysis, we study the false alarm probability, denoted by $\Pf$, and the detection probability, denoted by $\Pd$. By expressing the final decision in terms of the total LLR and denoting the threshold of the decision rule by $\gamma$, the false alarm and detection probabilities can be calculated as follows
\begin{align}\label{integral_in_TH2_1}
\Pf = \sum_{\ell > \gamma}\text{P}_{\overline{\mathsf{LLR}}}(\ell\vert \HH)  \quad
\text{and} \quad
\Pd = \sum_{\ell > \gamma}\text{P}_{\overline{\mathsf{LLR}}}(\ell\vert \HHH) ,
\end{align}
where $\text{P}_{\overline{\mathsf{LLR}}}(\ell|\HTemp_i)$ is the probability mass function (PMF) of the LLR for hypothesis $\HTemp_i$ and $\ell$ is its argument. Note that since the observations, i.e., the numbers of received molecules, are integers, the LLR takes discrete values. In the following, we present a numerical performance evaluation method for DTM and STM.
 
\subsection{DTM Reporting}

\subsubsection{General Analysis}
In the following theorem, we employ the method introduced in \cite{Robert_MaxLog} to numerically analyze the probabilities of detection and false alarm for any LLR-based decision metric of the form $\sum_{m=1}^{M} \mathsf{LLR}(\cdot)$. Note that the decision metrics of all considered optimal and sub-optimal DTM detectors can be expressed in this manner, except for the sub-optimal two-stage~detector. 
\begin{theorem}\label{theorem_general_Pfa_d}
Given the sensing distributions $\g_i\left(x_m\right), x_m\in{\mathcal{X}}$, for hypothesis $\HTemp_i, i=0, 1$, and any decision rule of the form $\sum_{m=1}^{M}\mathsf{LLR}\left(\sigma_{y}^{\{m\}}\right)$, where 
$\sigma_{y}^{\{m\}}$ is the argument of the local LLR, $\mathsf{LLR}(\cdot)$,
 corresponding to the $m$-th sensor, we have
\begin{align}
\Pf =\sum_{\vec{\sigma}_y \in \mathcal{S}}~\prod_{m=1}^M
\mathsf{L}_0\left(\sigma_{y}^{\{m\}}\right) \quad \text{and} \quad
\Pd =\sum_{\vec{\sigma}_y \in \mathcal{S}}~\prod_{m=1}^M
\mathsf{L}_1\left(\sigma_{y}^{\{m\}}\right), \label{Pfad_difinition}
\end{align}
where $\mathcal{S}\triangleq\Big\{ \vec{\sigma}_y = [\sigma_{y}^{\{1\}}, \cdots, \sigma_{y}^{\{M\}}]:~\sum_{m=1}^{M}\mathsf{LLR}\left(\sigma_{y}^{\{m\}}\right) > \gamma \Big\}$ and $\mathsf{L}_i(\cdot), i=0,1,$ is~defined~as
\begin{align}\label{T_h_difinition}
\mathsf{L}_i\left(w\right) \triangleq \sum_{x_m \in{\mathcal{X}}}\text{P}\left(w\vert x_m\right)\g_i\left(x_m\right),
\end{align}
with $\text{P}(w\vert x_m)$ being the probability that the FC receives $\sigma_{y}^{\{m\}}=w$ molecules from sensor $m$, when sensor $m$ observes $x_m$. 
\begin{proof}
The proof is given in Appendix B. 
\end{proof}
\end{theorem}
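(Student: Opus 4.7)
\medskip

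\noindent\textbf{Proof proposal.} The plan is to peel the probability expressions $\Pf$ and $\Pd$ apart one layer at a time, using first the definition of the decision rule, then independence across sensors (which holds for DTM because each sensor uses a distinct molecule type), and finally the law of total probability over the sensed value $x_m$. The key observation that makes the whole calculation clean is that, once we condition on the sensed value $x_m$ at sensor $m$, the conditional distribution of $\sigma_y^{\{m\}}=\sum_{n=1}^N Y_n^{\{m\}}$ depends only on $x_m$ (concretely, a Poisson with mean $N(x_m A+J)$ by \eqref{Eq:Yn} and the steady-state assumption), and not on which hypothesis produced $x_m$. The hypothesis only enters through the sensing law $\g_i(x_m)$.

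First I would write
\begin{align*}
\Pf = \msP\!\left(\sum_{m=1}^{M}\mathsf{LLR}(\sigma_y^{\{m\}})>\gamma \,\Big|\, \HH\right)
= \sum_{\vec\sigma_y\in\mathcal{S}} \msP(\vec\sigma_y \mid \HH),
\end{align*}
where the outer sum ranges over the discrete set $\mathcal{S}$ defined in the statement. Since in the DTM scheme the $M$ sensor-to-FC channels use distinct molecule types and are therefore statistically independent, and since $X^{\{m\}}$ are i.i.d.\ across sensors under each hypothesis, the joint probability factorizes as $\msP(\vec\sigma_y\mid\HH)=\prod_{m=1}^{M}\msP(\sigma_y^{\{m\}}\mid\HH)$.

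Next I would apply the law of total probability to each factor by marginalizing over the sensed value $x_m\in\mathcal{X}$:
\begin{align*}
\msP\!\left(\sigma_y^{\{m\}}=w\mid\HH\right)
= \sum_{x_m\in\mathcal{X}} \msP\!\left(\sigma_y^{\{m\}}=w\mid X^{\{m\}}=x_m,\HH\right)\msP\!\left(X^{\{m\}}=x_m\mid\HH\right).
\end{align*}
The second factor is $\g_0(x_m)$ by \eqref{z}, and the first factor reduces to $\text{P}(w\mid x_m)$ because, given $x_m$, the MC channel statistics of \eqref{Eq:Yn} no longer depend on the hypothesis. Hence $\msP(\sigma_y^{\{m\}}=w\mid\HH)=\mathsf{L}_0(w)$, which after substitution yields the claimed expression for $\Pf$. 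The derivation for $\Pd$ is identical after replacing $\HH$ by $\HHH$ and $\g_0$ by $\g_1$, giving $\mathsf{L}_1$.

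The main obstacle is essentially conceptual rather than computational: one must be careful that the probability $\text{P}(w\mid x_m)$ in the definition of $\mathsf{L}_i$ is well-defined (i.e.\ hypothesis-independent), which is exactly the reason for first conditioning on $x_m$ before marginalizing. Everything else is a matter of bookkeeping: writing the tail probability of a discrete statistic as a sum of atomic probabilities, factorizing over independent sensors, and collapsing the sum over $x_m$ into the shorthand $\mathsf{L}_i(\cdot)$. No further estimates or limit arguments are required.
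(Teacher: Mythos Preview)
Your argument is correct, and it is in fact more direct than the paper's own proof. The paper (Appendix~B) reaches the same formulas via a Laplace-transform detour: it defines the transform $\Phi^{\{m\}}(s\mid\HTemp_i)=\E[\Ex(s\,\mathsf{LLR}(\sigma_y^{\{m\}}))\mid\HTemp_i]$, conditions on $x_m$ to obtain $\Phi^{\{m\}}(s\mid\HTemp_i)=\sum_w \Ex(s\,\mathsf{LLR}(w))\mathsf{L}_i(w)$, multiplies these over $m$ by independence, inverts the product to get the PMF of $\overline{\mathsf{LLR}}$ as a sum of shifted deltas, and finally integrates over $\{\ell>\gamma\}$. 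You instead bypass the transform machinery entirely by noting that the decision event is measurable with respect to the discrete vector $\vec\sigma_y$, so its probability is simply the sum of the atomic probabilities $\msP(\vec\sigma_y\mid\HTemp_i)$ over $\mathcal S$, which you then factorize and marginalize over $x_m$. Both proofs hinge on the same two facts---independence of $\sigma_y^{\{m\}}$ across sensors under each hypothesis, and the hypothesis-independence of $\text{P}(w\mid x_m)$---but your route is shorter and avoids introducing transforms that are immediately inverted. The paper's approach would have an advantage if one wanted additional information about the LLR distribution (e.g.\ for the Chernoff bounds in Section~\ref{asympsec}), but for the statement as written your direct computation is cleaner.
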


The result in (\ref{Pfad_difinition}) can be used to plot the Receiver Operating Curve (ROC) of the DTM detectors (except the two-stage detector) after substituting the corresponding LLR in set $\mathcal{S}$. In particular, using (\ref{Hi-new}), $\text{P}(\sigma_{y}^{\{m\}}\vert x_m)$ required in Theorem \ref{theorem_general_Pfa_d} is given by
\begin{align}
\text{P}(\sigma_{y}^{\{m\}}\vert x_m)=\frac{\Ex\left(-N(x_mA+J)\right)\left(x_mA+J\right)^{\sigma_{y}^{\{m\}}}}{\sigma_{y}^{\{m\}}!}.
\end{align}
Combining this result with (\ref{T_h_difinition}), we obtain
\begin{align}\label{optimum}
\mathsf{L}_i(\sigma_{y}^{\{m\}})=
\sum_{x_m \in{\mathcal{X}}}\g_i(x_m)\frac{\Ex\left(-N(x_mA+J)\right)\left(x_mA+J\right)^{\sigma_{y}^{\{m\}}}}{\sigma_{y}^{\{m\}}!},\quad i=0,1.
\end{align}

\subsubsection{Two-Stage Detector} 
 \label{ids_res}
For the $m$-th sensor, based on the local decision rule in (\ref{mMAP_d}), we can derive the local false alarm probability as
\begin{align}\label{pflmy}
\Pfl   &=  \msP \left( \sum_{n=1}^{N} y_n^{\{m \}} > \gamma_{\text{local}} |\HH \right)
= \sum_{x_m \in \mathcal{X}}\msP \left( \sum_{n=1}^{N} y_n^{\{m \}} > \gamma_{\text{local}} \Big | X^{\{ m\}}=x_m, \HH\right)
\nonumber \\
&\quad\times \msP \left( X^{\{ m\}}=x_m |\HH \right) 
= \sum_{x_m \in \mathcal{X}} \g_{0}(x_m) \text{H}(\gamma_{\text{local}}, NJ+Nx_mA),
\end{align}
where $\text{H}(x,\lambda)$ is the
complementary commutative distribution function (CDF) of the Poisson distribution, i.e., 
\begin{align}\label{Cum}
\text{H}(x,\lambda) = \sum_{k=x+1}^{+\infty} \frac{\Ex(-\lambda) \lambda^k}{k!}.
\end{align}
Similarly the local detection probability can be derived as follows
\begin{align}\label{pdlmy}
\Pdl  =  \msP \left( \sum_{n=1}^{N} y_n^{\{m \}} > \gamma_{\text{local}} |\HHH \right) 
		= \sum_{x_m \in \mathcal{X}} \g_{1}(x_m) \text{H}(\gamma_{\text{local}}, NJ+Nx_mA).
\end{align}
As discussed in Section \ref{ids}, the overall decision rule compares the number of sensors that decide for $\HHH$ with an integer threshold $\gamma_{\text{global}}$. With the decision rule given in (\ref{mMAP_dglob}), the overall false alarm and detection probabilities are obtained as
\begin{align}
 \Pf     &=  \msP \left( \sum_{m=1}^{M}  \text{d}_{\text{local}}^{\{m\}} > \gamma_{\text{global}}  |\HH \right) = \sum_{m=\gamma_{\text{global}}+1}^{M} {M \choose m} \left(\Pfl\right) ^{m} (1- \Pfl)^{M-m} \quad \text{and}\\
 \Pd 	  &=  \msP \left( \sum_{m=1}^{M}  \text{d}_{\text{local}}^{\{m\}} > \gamma_{\text{global}}  |\HHH \right) = \sum_{m=\gamma_{\text{global}}+1}^{M} {M \choose m} \left(\Pdl\right) ^{m} (1- \Pdl)^{M-m},
\end{align}
respectively, where $\Pfl$ and $\Pdl$ are given in (\ref{pflmy}) and (\ref{pdlmy}), respectively.

\subsection{STM Reporting}

For STM, we can adopt $\sigma_y=\sum_{m=1}^{M} \sum_{n=1}^{N} y_n^{\{m\}}$ as the decision variable instead of the LLR, see Corollary~\ref{CorolSTM} and (\ref{OptSTMsimple}). This facilitates the derivation of closed-form expressions for the false alarm and detection probabilities. In particular, the false alarm probability can be derived as
\begin{align}\label{pf11}
\Pf = \msP \left( \sigma_y > \gamma |\HH \right)
&=\sum_{x\in \mathcal{M}'}\msP \left(  \sum_{m=1}^{M} \sum_{n=1}^{N} y_n^{\{m \}}>\gamma  \bigg |  \sum_{m=1}^{M}X^{\{ m\}}=x,\HH   \right)
\msP \left( \sum_{m=1}^{M}X^{\{ m\}}=x \big| \HH \right) \nonumber \\
&= \sum_{x\in \mathcal{M}'}   \GG_0(x)  \text{H}(\gamma, NJ+xNA).
\end{align}
Similarly, the detection probability can be computed as
\begin{align}\label{pd11}
\Pd  =\msP \left( \sigma_y > \gamma |\HHH \right) 
=\sum_{x\in \mathcal{M}'} \GG_1(x)  \text{H}(\gamma, NJ+xNA).
\end{align}

\begin{remark}
The performance of the MRC detector for DTM can be evaluated based on Theorem~\ref{theorem_general_Pfa_d}. Alternatively, since the MRC detector for DTM also employs $\sigma_y=\sum_{m=1}^{M} \sum_{n=1}^{N} y_n^{\{m\}}$ as decision variable, the closed-form expressions for the false alarm and detection probabilities for STM given in (\ref{pf11}) and (\ref{pd11}), respectively, are also valid for the MRC detector for DTM, when $J$ is replaced by $MJ$.
\end{remark}

%

\subsection{Asymptotic Performance} \label{asympsec}

In this subsection, we study the asymptotic performance with respect to the number of deployed sensors.
 As an asymptotic performance metric, we investigate the exponential rate of decay of the probability of missed detection, i.e., $\Pm= 1-\Pd$, and the probability of false alarm, i.e.,~$\Pf$, as $M\to\infty$.

For simplicity, let us define $ \overline{\mathsf{LR}}(\vec{\sigma}_y)=\mathsf{exp}(\overline{\mathsf{LLR}}(\vec{\sigma}_y))$ and $\mathsf{LR}(\sigma_y^{\{m\}})=\mathsf{exp}(\mathsf{LLR}(\sigma_y^{\{m\}}))$ as the likelihood ratio of the total observation and the observation from sensor $m$ at the FC, respectively. Note that due to our assumption that  observations are i.i.d.,  $\overline{\mathsf{LR}}(\vec{\sigma}_y)=\prod_{m=1}^M \mathsf{LR}(\sigma_y^{\{m\}})$ holds. Using the Chernoff bound \cite{Chernoff_Bound}, we obtain the following bounds on the probabilities of false alarm and missed detection
\begin{align}\label{PfPmChernof}
\Pf &= \msP \left(\mathsf{log} \left( \overline{\mathsf{LR}}(\vec{\sigma}_y) \right)-\gamma >0| \HH \right) \leq \Ex \left(  \mu_0(s)-s \gamma \right),~\forall s>0 \nonumber \\
\Pm &= \msP \left(\mathsf{log} \left( \overline{\mathsf{LR}}(\vec{\sigma}_y) \right)-\gamma <0| \HHH \right) \leq \Ex \left(  \mu_1(s)-s \gamma \right),~ \forall s<0, 
\end{align}
respectively, where $\mu_i(s), i=0,1,$ is the logarithm of the moment generating function (MGF) of the decision variable, i.e., $\mathsf{log} \left( \overline{\mathsf{LR}}(\vec{\sigma}_y) \right)$, under hypothesis $\HTemp_i$, i.e.,
\begin{align}\label{Ex}
\Ex(\mu_i(s)) = \E \left(  \Ex\left(s  \mathsf{log} \left( \overline{\mathsf{LR}}(\vec{\sigma}_y) \right)\right) |\HTemp_i \right) 
					 =\E \left(  \left( \overline{\mathsf{LR}}(\vec{\sigma}_y) \right)^s|\HTemp_i  \right),\quad i=0,1.
\end{align}
By substituting (\ref{Ex}) into (\ref{PfPmChernof}) and exploiting relation $\overline{\mathsf{LR}}(\vec{\sigma}_y)=\prod_{m=1}^M \mathsf{LR}(\sigma_y^{\{m\}})$, the upper bounds become
\begin{align}
\Pf &\leq \Ex \left( M \mathsf{log} \left( \sum_{x\in \mathbb{Z}} \left(\mathsf{LR}(x)\right)^s f_0(x) \right) -s \gamma \right)\triangleq \Pf^{\text{upp}}(s),~\forall s>0 \label{Bound111} \\
\Pm &\leq \Ex \left( M \mathsf{log} \left( \sum_{x\in \mathbb{Z}} \left(\mathsf{LR}(x)\right)^s f_1(x)\right) - s \gamma \right)\triangleq \Pm^{\text{upp}}(s),~ \forall s<0,  \label{Bound222}
\end{align}
where for simplicity of presentation, we have denoted $\msP( \sigma_y^{\{m\}}=x | \HTemp_i)$ by $f_i(x), i=0, 1$. We note that asymptotically as $M\to\infty$, we obtain  
\begin{align}
\Pf^{\text{upp}}(s)\approx \mathsf{exp}(-M\text{Ex}_0 )\quad \text{and} \quad \Pm^{\text{upp}}(s)\approx \mathsf{exp}(-M\text{Ex}_1 ), \label{UpperEq}
\end{align}
where 
\begin{align}
\text{Ex}_i = -\mathsf{log} \left( \sum_{x\in \mathbb{Z}} \left(\mathsf{LR}(x\right)^s f_i(x) \right),\,\,i=0,1. \label{Exponent_i}
\end{align}
Therefore, exponents $\text{Ex}_0$ and $\text{Ex}_1$ reveal how $\Pf$ and $\Pm$ improve, respectively, as the number of sensors increases.

Note that for any positive $s$, $\Pf^{\text{upp}}(s)$ and $\Pm^{\text{upp}}(s)$ constitute upper bounds for $\Pf$ and $\Pm$, respectively. Therefore, the upper bounds can be tightened by optimizing $\Pf^{\text{upp}}(s)$ and $\Pm^{\text{upp}}(s)$ with respect to $s$, i.e.,
\begin{align}\label{Eq:PeOpt}
s_0^* = \underset{s>0}{\mathsf{argmin}} \,\, \Pf^{\text{upp}}(s) \quad \text{and} \quad
s_1^* = \underset{s<0}{\mathsf{argmin}} \,\, \Pm^{\text{upp}}(s),
\end{align}
yield the tightest upper bounds. Moreover, since $\Pf^{\text{upp}}(s)$ and $\Pm^{\text{upp}}(s)$ have an exponential form and an exponential function is monotonically increasing in terms of its argument, one can optimize the argument of the exponential function instead of directly optimizing the function as in (\ref{Eq:PeOpt}). The optimal $s_i^*$ belongs to one of the the stationary points of  $\Pf^{\text{upp}}(s)$ and $\Pm^{\text{upp}}(s)$ which are obtained by taking the derivatives of $\mathsf{log}(\Pf^{\text{upp}}(s))$ and $\mathsf{log}(\Pm^{\text{upp}}(s))$ with respect to $s$. This leads to the following equation
\begin{align}\label{s_opt}
\frac{\mathsf{d} \mathsf{log}\left({\text{P}}_{t}^{\text{upp}}(s) \right) }{\mathsf{d} s} & = \frac{M\sum_{x\in \mathbb{Z}} s\left(\mathsf{LR}(x)\right)^{s-1} f_i(x)}{\sum_{x\in \mathbb{Z}} \left(\mathsf{LR}(x)\right)^s f_i(x)} = ~\gamma~~ \text{for}~ i=0,1, ~\text{and} ~t=\text{fa},~\text{m}
 \nonumber \\
& \overset{M\to\infty}{\longrightarrow} \quad
\sum_{x\in \mathbb{Z}} \left(\mathsf{LR}(x)\right)^{s-1} f_i(x) = 0.
\end{align}

The above equation can be solved numerically using mathematical software packages such as Mathematica.  The tightness of the derived bounds and approximations when utilizing the optimum value of $s$ as the solution to (\ref{s_opt}) will be confirmed in the next section.

\section{Numerical and Simulation Results}
In this section, we provide numerical and simulation results to assess the
performance of the detectors proposed in this paper.

\subsection{Simulation Setup}
We define the signal-to-noise ratio as $\text{SNR}=A/J$, where given the definition of $A = \sum_{k=0}^{\infty}h_k A^{\max}$, the SNR is in fact  the expected number of molecules received at the FC due to emission of the $A^{\max}$ molecules by one sensor  divided by the expected number of noise molecules observed in one time slot. 
Next, we present the distributions $\g_0(x_m)$ and $\g_1(x_m)$ of the $L$ possible sensing values. We expect that under hypothesis $\HH$ ($\HHH$), lower (higher) values for $x_m$ are more probable. That is, in the absence of the abnormality, a smaller number of molecules is expected to be released from the sensors compared with the case that the abnormality exists. Here, we adopt distributions of the form $a_i\Ex(b_ix)$ which satisfy the aforementioned condition with  appropriately chosen values for the constants $a_i$ and $b_i$, i.e.,
 \begin{align} 
 \g_0(x_m) = \frac{\Ex\left(-2.5x_m \right)}{\sum_{x \in \mathcal{X}}\Ex\left(-2.5x \right)} , ~~
\g_1(x_m) = \frac{\Ex\left(3.5x_m \right)}{\sum_{x \in \mathcal{X}}\Ex\left(3.5x \right)}.
\end{align} 
Note that $a_i$ is a normalization constant which ensures that the sum of all probabilities $\g_i(x_m)$ over $x_m\in\mathcal{X}$ is one, i.e., $\sum_{x_m\in\mathcal{X}}\g_i(x_m)=1$. Moreover, larger values for $|b_i|$ correspond to distributions that imply more reliable sensors. 
In order to have a fair comparison between a hard decision scheme, defined by $L=2$ and probabilities $p_0$ and $p_1$, and a soft decision scheme, defined by $L>2$ and PDFs $\g_0(x_m)$ and $\g_1(x_m)$,  we choose $p_0$ and $p_1$ as functions of $\g_0(x_m)$ and $\g_1(x_m)$, respectively, as follows
\begin{align}
p_0=0.5 \g_0\left(0.5\right)+\sum_{x_m>0.5}\g_0\left(x_m\right),~~
p_1=0.5 \g_1\left(0.5\right)+\sum_{x_m<0.5} \g_1\left(x_m\right).
\end{align}

Furthermore, for all results, we assume an independent non-zero environmental Poisson noise with mean $J=4$, unless specified otherwise, and adopt a time slot duration of $T=100 \mu s$. The numerical results are based on the analysis provided in Sections~V-A and V-B, and for the simulation results, the probability of error is determined by averaging over $10^6$ independent realizations of $Y_n^{\{m\}}$ and $Y_n$.

\begin{remark}
We note that in \cite{Lahouti_Detection}, the hypothesis testing is composite, i.e., some parameters of the system model are not known. In contrast, in this paper, all parameters of the system model are assumed to be known. In addition,  in \cite{Lahouti_Detection}, the reporting channel is assumed to be an AWGN channel, whereas in this paper, the reporting channel is modeled as a Poisson process which introduces an inherent \textit{signal-dependent} noise, even if there is no background noise ($J_m=J=0,\,\,\forall m$). Therefore, given these fundamental differences, we do not compare our results with those of \cite{Lahouti_Detection},  since such a comparison  would be unfair.
\end{remark}

\subsection{Performance Evaluation}
In the following, we comprehensively study and compare the performance of the proposed detectors and signaling schemes and verify our analysis.

\begin{figure}
\centering 
\includegraphics[width=0.6\textwidth] {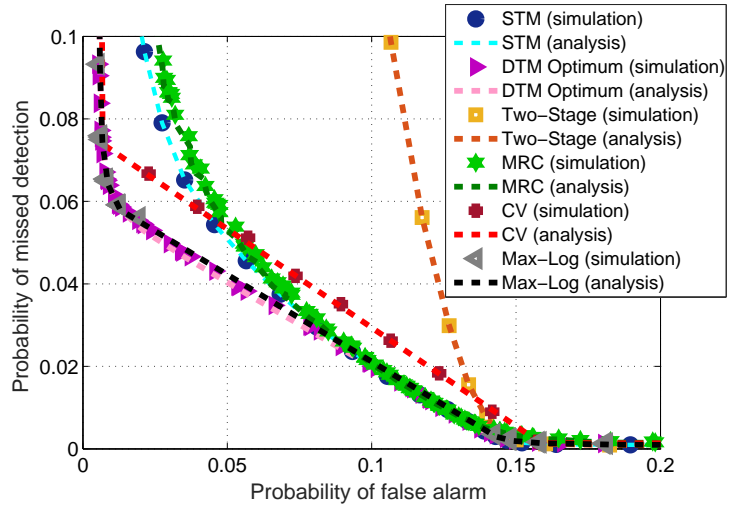} 
\vspace{-0.6cm} 
\caption{ ROC for parameters $L=2, A=15, J=4, M=2$, and $N=1$. }
\label{Fig:Analysis}
\vspace{-0.75cm}
\end{figure}

In Fig. \ref{Fig:Analysis}, the analytical results presented in Theorem \ref{theorem_general_Pfa_d} are confirmed by simulations. In particular, Fig. \ref{Fig:Analysis} shows simulation and analytical results for the probability of  missed detection versus the probability of false alarm for different detectors. 
As can be observed, for all detectors, the theoretical results  are in excellent agreement with the simulation results. Moreover, for the set of parameters considered in Fig. \ref{Fig:Analysis}, the optimal detector for DTM outperforms the optimal detector for STM especially for small false alarm probabilities, i.e., for $P_{\text{fa}}\leq 0.08$.  In addition, for DTM, the Max-Log detector closely approaches the performance of the optimal detector and outperforms all other sub-optimal detectors. Furthermore, for the adopted set of system parameters, the CV detector has a superior performance compared to the MRC detector for small false alarm probabilities, i.e., for $P_{\text{fa}}\leq 0.048$. In addition, for false alarm probabilities $P_{\text{fa}}\leq 0.15$, the simple two-stage detector suffers from a considerable performance loss compared to the other detectors; however, as the false alarm probability increases, this performance loss decreases too.


Next, we compare the performance of the optimal detectors for DTM and STM in more detail. In particular, in Fig. \ref{Fig:DSComp}, the probability of missed detection for DTM and STM is plotted versus $A$ for different values of $J$ and a given false alarm probability of $\Pf=0.05$. We observe from Fig. \ref{Fig:DSComp} that as the value of $A$ increases, the probability of missed detection decreases since the MC reporting channel becomes more reliable. Moreover, we observe that the relative performance of STM and DTM depends on the value $J$. In fact, for small $J$, DTM outperforms  STM  whereas for large $J$, STM may significantly outperform DTM. In fact, the advantage of DTM over STM is that that the FC can distinguish between the molecules released by different sensors and exploit this additional knowledge for the improvement of the detection performance. On the other hand, the advantage of STM over DTM comes from the fact that the mean number of noise molecules for \textit{each type of molecule} is constant which leads to a higher overall noise for DTM compared to STM, especially when the number of sensors is large. Therefore, whether DTM or STM is preferable depends on the system parameters.

\begin{figure} 
\centering 
\includegraphics[width=0.6\textwidth] {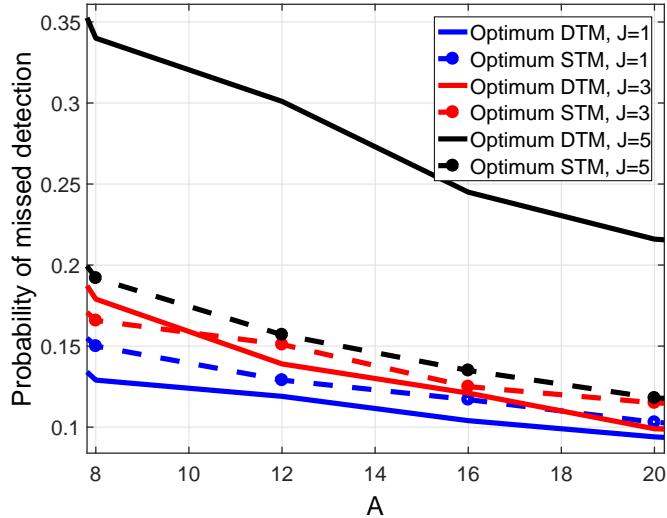}    
\vspace{-0.5cm}
\caption{ Probability of missed detection versus $A$ for STM and DTM using the respective optimal decision rules for the parameters of $L=4, M=2, N=2$, and $\Pf = 0.05$.  All curves were obtained via simulation.}
\label{Fig:DSComp}
\vspace{-0.3cm}
\end{figure}


In the following, we investigate the performance of the proposed sub-optimal low-complexity detectors for DTM in detail. In Fig.~\ref{Fig:ApproxSH}, we show the probability of missed detection versus the probability of false alarm for all proposed DTM detectors and for both hard decision, i.e., $L=2$, and soft decision, i.e., $L=4$, at the sensors. From this figure, it can be observed that the performance of soft decision is superior to that of hard decision for all detectors. In addition, among the sub-optimum detectors, the probability of missed detection for the Max-Log detector is lower than that of the other sub-optimal detectors for any given false alarm probability, which confirms our expectations from the LLR comparison in Fig.~\ref{Fig:Accuracy}. 
We note that the sudden changes in the performance of the CV detector are due to the discontinuity of its LLR.

\begin{figure}
\centering
\includegraphics[width=0.6\linewidth]{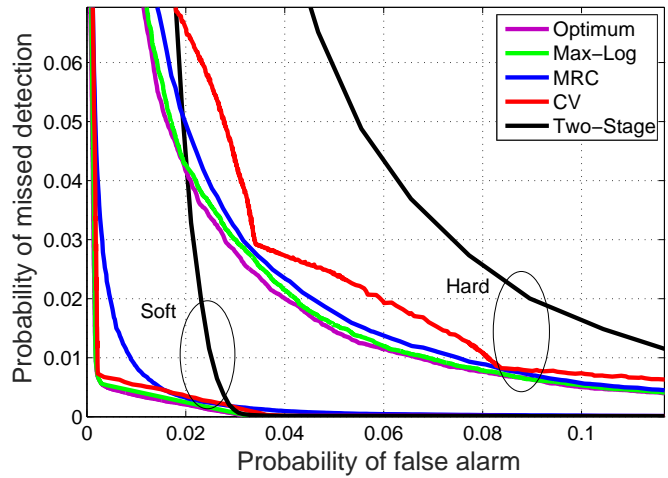} 
\vspace{-0.3cm}
\caption{\footnotesize{ Probability of missed detection versus probability of false alarm, for parameters $J=4, M=2, N=2$, and $A=15$ for DTM.  All curves were obtained via simulation.}}
\label{Fig:ApproxSH}
\vspace{-0.3cm}
\end{figure}
%
\begin{figure}
\centering
\includegraphics[width=0.6\linewidth]{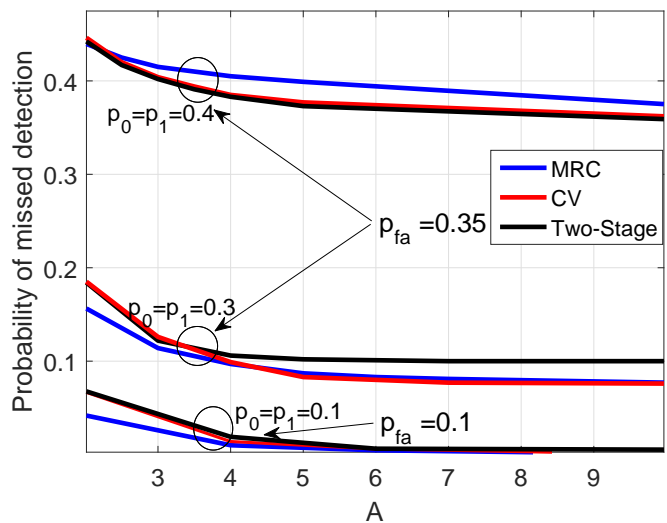} 
\vspace{-0.3cm}
\caption{\footnotesize{ Probability of missed detection versus $A$ for parameters $L=2, J=1, M=5, N=2$, $p_0=p_1\in\{0.1,0.3,0.4\}$, and $\Pf\in\{0,1,0.35\}$, and for the two-stage, CV, and MRC detectors.  All curves were obtained via simulation.}}
\label{Fig:New}
\vspace{-0.3cm}
\end{figure}

It is interesting to further study and compare the performances of the CV, MRC, and two-stage detectors since none of these detectors is uniformly better than the other two, cf. Fig.~\ref{Fig:ApproxSH}. 
In Fig. \ref{Fig:New}, we show the probability of missed detection for the CV, MRC, and two-stage detectors versus $A$ for three sensing scenarios, namely $p_0 =p_1=0.4$ and $p_0=p_1=0.3$ with a false alarm probability of $\Pf=0.35$ and $p_0 =p_1=0.1$ with a false alarm probability of $\Pf=0.1$.  
As can be observed,  the performance of the MRC detector improves relative to that of the CV and two-stage detectors as $p_0$ and $p_1$ decrease. This is due to the fact that we assumed ideal sensing to derive the MRC detector, cf. Section IV-B. Therefore, as the sensors becomes more reliable, i.e., for smaller $p_0$ and $p_1$, the MRC approximation of the optimal LLR becomes more accurate. Hence, for sufficiently reliable sensors, the MRC detector outperforms the CV detector whereas for unreliable sensors, the CV detector performs better than the MRC detector, especially for large values of $A$, as an ideal reporting channel was assume for derivation of the CV detector. Despite its simplicity, the two-stage detector also performs well compared to the other two detectors, especially for $p_0 =p_1=0.4$ and $p_0=p_1=0.1$.

Next, we study the effect that parameters $N$ and $M$ have on detection performance. We provide results only for the MRC detector for  clarity of presentation. In particular, from Fig.~\ref{Fig:MN}, we observe that although increasing $N$ improves the detection performance, the impact of increasing the number of sensors is more substantial. 
This is due to the fact that by increasing $N$, effectively, the overall reliability of the MC reporting channel improves,  similar to the effect that repetition codes have in wireless communications, but the overall detection performance is still limited by the reliability of the sensing mechanism at the sensors. In contrast, by increasing $M$, the number of independent sensing observations increases,  similar to effect that increasing the diversity gain has in conventional wireless communications. The latter effect is generally  more substantial compared to the former as far as  the overall detection performance is concerned.
%

\begin{figure}
\centering
\includegraphics[width=0.6\linewidth]{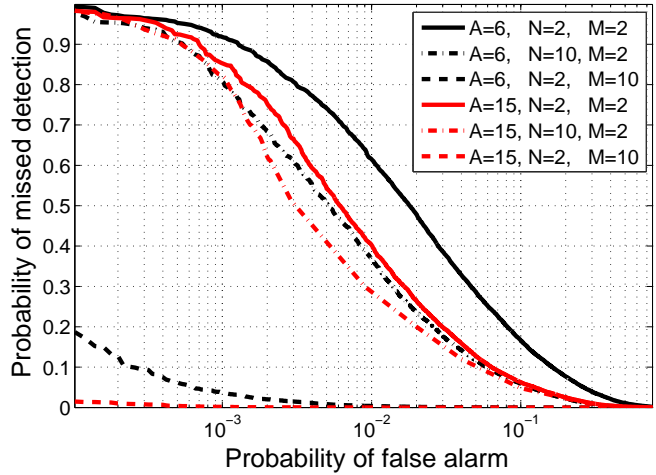} 
\vspace{-0.3cm}
\caption{\footnotesize{ Probability of missed detection versus probability of false alarm for the MRC detector  for  parameters $L=4$ and $J=4$, and for different values of $A, N$, and $M$. All curves were obtained via simulation. }}
\label{Fig:MN}
\vspace{-0.5cm}
\end{figure}

%
%

In the following,  we validate the asymptotic analysis given in Section~V-C. Again, we provide results only for the MRC detector for clarity of presentation. Moreover, we choose the detection threshold $\gamma$ such that $\Pf$ and $\Pm$ are identical. 
In Fig.~\ref{Fig:Exp}, we show the error exponent $\mathsf{Ex}=\mathsf{Ex}_0=\mathsf{Ex}_1$ versus $|s|$ for  $p_0 = p_1 = 0.1$, $N=1$, and different values of $A\in\{4,6,8,10\}$. It can be observed from Fig.~\ref{Fig:Exp} that the value of $s^*$ depends on the value of $A$. In addition, as $A$ increases, the optimal $s^*$ decreases and the exponent increases. To evaluate the tightness of the upper bounds in (\ref{Bound222}), in Fig.~\ref{Fig:ExpComp}, we show the probability of error, $\Pf=\Pm$, obtained via simulation and the upper bound for optimal $s^*$ obtained from (\ref{s_opt})  versus $M$ for $A=4$, $J=4$, $N=1$, and $p_0 = p_1 = 0.1$. We observe from Fig.~\ref{Fig:ExpComp} that the derived upper bound in (\ref{Bound222}) becomes tight when $M$ is large. That is, as $M$ increases, the difference in the actual error probability and the derived upper bound decreases which verifies the derivations in Section~V-C. Moreover, as $A$ increases, the slope of the error rate curve increases. This is due to the fact that as $M$ increases, the reporting channel becomes the bottleneck for  the overall detection performance and increasing $A$ improves the reliability of  the reporting channel. 
%
\begin{figure} 
\centering 
\includegraphics[width=0.6\textwidth]{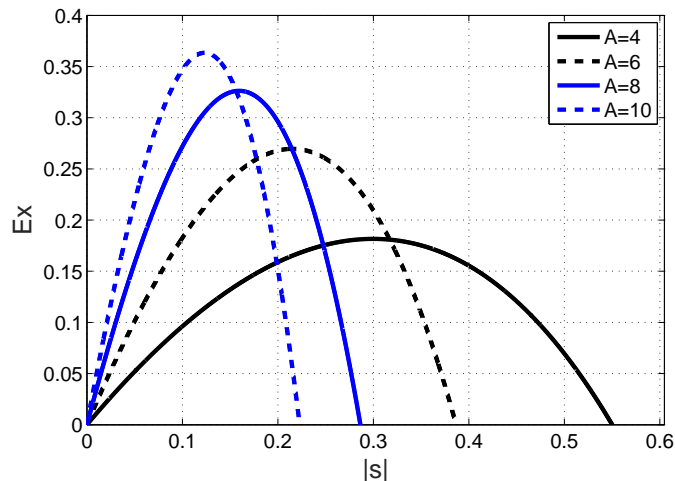} 
\vspace{-0.8cm}
\caption{Error exponent versus $|s|$ for $J=4, p_0 = p_1 = 0.1$, and different $A\in\{4,6,8,10\}$.  }
\label{Fig:Exp} 
\vspace{-0.4cm}
\end{figure}
\begin{figure}
\centering 
\includegraphics[width=0.6\textwidth] {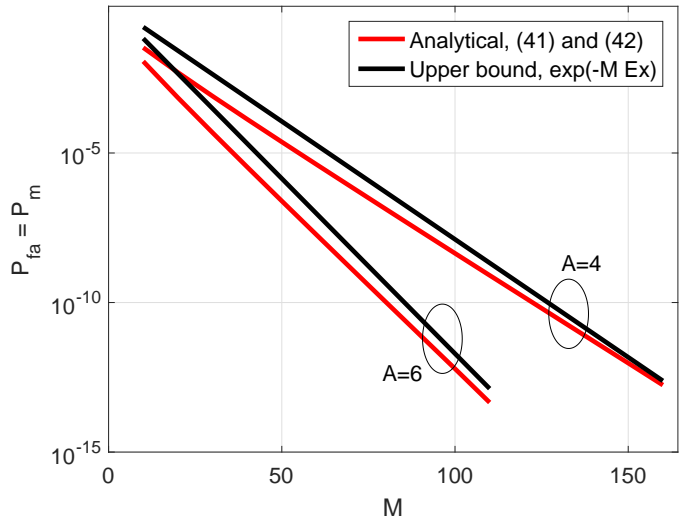} 
\vspace{-0.7cm}
\caption{ Probability of error, $\Pf=\Pm$, versus $M$ for MRC detector, $J=4$, $p_0=p_1=0.1$, and $A\in\{4,6\}$. }
\label{Fig:ExpComp}
\vspace{-0.7cm}
\end{figure}

\section{Conclusions}

In this paper, we studied abnormality detection via diffusive MCs  where several sensors send their sensed values to an FC via MC links. We considered the DTM and STM reporting schemes where the sensors employ different types of molecules and the same type of molecules, respectively. The final decision about whether or not an abnormality has occurred is made at the FC based on the number of molecules received from the sensors. For this collaborative detection system, we first derived the optimal decision rules for both DTM and STM. For STM, we showed that the optimal detector has a simple equivalent form which is easier to implement. Unfortunately, for DTM, the optimal detector is complex and cannot be transformed into  a simple equivalent form. Therefore, we also developed several sub-optimal detectors for DTM. Furthermore, we analyzed the performance of the proposed detectors in terms of their false alarm and missed detection probabilities. Simulation results verified our analytical derivations and provided interesting insights for system design. For example, although STM is generally simpler than DTM, it outperforms DTM when a large number of noise molecules are present in the environment. Moreover, for the proposed sub-optimal DTM detectors,  the Max-Log detector offers the best performance, and the relative performance of the MRC, CV, and two-stage detectors crucially depends  on the choice of system parameters.

\appendix
\subsection{Proof of Theorem \ref{Th2}} \label{Th2:Proof} 
We have to prove the monotonicity of the $\mathsf{LLR}(\vec{y}^{\{m\}})$ in (\ref{llr_loc}) in terms of $\sigma_{y}^{\{m\}}=\sum_{n=1}^{N}y^{\{m\}}_n$. Without loss of generality, we prove the monotonicity of the likelihood ratio instead of the LLR. Albeit being a discrete function, our proof assumes that $\sigma_{y}^{\{m\}}$ is continuous; however, if a function is monotonic with respect to a continuous argument, the monotonicity also applies to discrete arguments. First, we write the likelihood ratio of each sensor according to (\ref{llr_loc}) as
\begin{align}\label{temp_lr}
\text{e}^{\mathsf{LLR}\left(\sigma_{y}^{\{m\}}\right)}=\frac{\sum_{x_m \in{\mathcal{X}}}\g_1\left(x_m\right)\Ex{\left(-N\left(x_mA+J\right)\right)}\left(x_mA+J\right)^{\sigma_{y}^{\{m\}}}}{\sum_{x_m\in{\mathcal{X}}}\g_0\left(x_m\right)\Ex{\left(-N\left(x_mA+J\right)\right)}\left(x_mA+J\right)^{\sigma_{y}^{\{m\}}}}.
\end{align}
After taking the derivative of (\ref{temp_lr}) with respect to $\sigma_{y}^{\{m\}}$, since the denominator of the derivative is positive, we only need to investigate whether the numerator of the derivative is also positive. The numerator of the derivative can be written as
\begin{align}
\text{Num}=\sum_{x_m\in{\mathcal{X}}}\sum_{x'_m\in{\mathcal{X}}}&\Ex{\left(-N\left(\left(x_m+x'_m\right)A+J\right)\right)}\left(x_mA+J\right)^{\sigma_{y}^{\{m\}}}\left(x'_mA+J\right)^{\sigma_{y}^{\{m\}}} \nonumber \\
&\times \g_1\left(x_m\right)\g_0\left(x'_m\right)\mathsf{log}{\left(\frac{x_mA+J}{x'_mA+J}\right)}.
\end{align}
Obviously, except for the $\mathsf{log}$ term, all other terms are positive. Hence, we separate the inner sum into terms with $x'_m>x_m$ and $x'_m<x_m$, respectively. Accordingly, one can define
\begin{align}
\text{LHS}=\sum_{x_m\in{\mathcal{X}}}\sum_{x'_m\in{\mathcal{X}}, x_m>x'_m}&\Ex{\left(-N\left(\left(x_m+x'_m\right)A+J\right)\right)}\left(x_mA+J\right)^{\sigma_{y}^{\{m\}}}\left(x'_mA+J\right)^{\sigma_{y}^{\{m\}}} \nonumber\\
&\times \g_1\left(x_m\right)\g_0\left(x'_m\right)\mathsf{log}{\left(\frac{x_mA+J}{x'_mA+J}\right)}
\end{align}
and
\begin{align}\label{nested}
\text{RHS}=\sum_{x_m\in{\M}}\sum_{x'_m\in{\mathcal{X}}, x'_m>x_m}&\Ex{\left(-N\left(\left(x_m+x'_m\right)A+J\right)\right)}\left(x_mA+J\right)^{\sigma_{y}^{\{m\}}}\left(x'_mA+J\right)^{\sigma_{y}^{\{m\}}}\nonumber \\ 
&\times \g_1\left(x_m\right)\g_0\left(x'_m\right)\mathsf{log}{\left(\frac{x'_mA+J}{x_mA+J}\right)}.  
\end{align}
Then, Num$=$LHS-RHS, which is positive only if LHS$\geq$RHS. By exchanging the two summations and variables $x_m$ and $x'_m$ in (\ref{nested}), we obtain
\begin{multline}
\text{LHS}-\text{RHS}=\sum_{x_m\in{\M}}\sum_{x'_m\in{\M}, x_m>x'_m}\Ex{\left(-N\left(\left(x_m+x'_m\right)A+J\right)\right)}\left(x_mA+J\right)^{\sigma_{y}^{\{m\}}}\left(x'_mA+J\right)^{\sigma_{y}^{\{m\}}}\\
\times \mathsf{log}{\left(\frac{x_mA+J}{x'_mA+J}\right)}\left[\g_1\left(x_m\right)\g_0\left(x'_m\right)-\g_1\left(x'_m\right)\g_0\left(x_m\right)\right], \label{LR}
\end{multline} 
where all terms except the one in brackets are positive. Therefore, a sufficient condition for LHS$-$RHS$\geq0$ is
\begin{align}
\frac{\g_1\left(x_m\right)}{\g_1\left(x'_m\right)}\geq \frac{\g_0\left(x_m\right)}{\g_0\left(x'_m\right)},\quad \forall x_m, x'_m\in{\M}, \forall x_m>x'_m.
\end{align}
This completes the proof.

\subsection{Proof of Theorem \ref{theorem_general_Pfa_d} } \label{Th1:Proof}
By defining $\Phi^{\{m\}}\left(s\vert \HTemp_i\right), i=0, 1,$ as the Laplace transform of the PDF of the LLR of sensor~$m$, we have
\begin{align*}
\Phi^{\{m\}}\left(s\vert \HTemp_i \right)=\E \left(\Ex \left(s\mathsf{LLR}\left(\sigma_{y}^{\{m\}}\right)\right)\vert \HTemp_i \right)=\sum_{x_m \in{\M}}\E \left(\Ex \left(s\mathsf{LLR} \left(\sigma_{y}^{\{m\}}\right)\right) \vert \HTemp_i, x_m \right)\g_i\left(x_m \right),
\end{align*}
where
\begin{align}\label{eq}
\E\left(\Ex\left(s\mathsf{LLR}\left(\sigma_{y}^{\{m\}}\right)\right) \vert \HTemp_i, x_m\right)=\sum_{w}\Ex\left(s\mathsf{LLR}\left(w\right)\right)\text{P}\left(\sigma_{y}^{\{m\}}=w \vert \HTemp_i, x_m\right),
\end{align}
and the parameters in (\ref{eq}) are defined in Theorem \ref{theorem_general_Pfa_d}. Conditioned on $x_m$, the term $\text{P}\big(\sigma_{y}^{\{m\}}=w \vert \HTemp_i, x_m\big)$ looses its dependence on $\HTemp_i$. After some calculations, we have
\begin{align}
\Phi^{\{m\}}\left(s\vert \HTemp_i\right)=\E\left(\Ex\left(s\mathsf{LLR}\left(\sigma_{y}^{\{m\}}\right)\right)\vert \HTemp_i\right)=\sum_{w}\Ex\left(s\mathsf{LLR}\left(w\right)\right)\mathsf{L}_i\left(w\right),
\end{align}
where $\mathsf{L}_i\left(w\right)$ is given in (\ref{T_h_difinition}). Now, since the molecules received from different sensors are independent, we obtain the Laplace transform of the PDF of the LLR for the $M$ sensors as
\begin{align}\label{eq64}
\Phi\left(s\vert \HTemp_i\right) = \sum_{\vec{\sigma}_y \in{\LLRset}}\mathsf{L}_i \left( \sigma_{y}^{\{1\}}\right) \cdots \mathsf{L}_i \left( \sigma_{y}^{\{m\}}\right)  \Ex \left(s\left( \mathsf{LLR} \left(\sigma_{y}^{\{1\}}\right)+\cdots
 +\mathsf{LLR}\left(\sigma_{y}^{\{M\}}\right)\right)\right),
\end{align}
where $\mathcal{S}\triangleq\Big\{ \vec{\sigma}_y = [\sigma_{y}^{\{1\}}, \cdots, \sigma_{y}^{\{M\}}]:~\sum_{m=1}^{M}\mathsf{LLR}\left(\sigma_{y}^{\{m\}}\right) > \gamma \Big\}$. The expression in (\ref{eq64}) is a weighted sum of exponential terms in $s$. Hence, by applying the inverse Laplace transform, each individual term in (\ref{eq64}) results in a shifted version of $\delta\left(\ell\right)$ with argument $\ell$. Hence, the PMF of the $\overline{\mathsf{LLR}}$ is given by
\begin{align}\label{eq65}
\text{P}_{\overline{\mathsf{LLR}}}(\ell\vert \mathcal{H}_i) = \sum_{\vec{\sigma}_y \in{\LLRset}}\mathsf{L}_i\left(\sigma_{y}^{\{1\}}\right)\cdots \mathsf{L}_i\left(\sigma_{y}^{\{M\}}\right)\delta\left(\ell+\left(\mathsf{LLR}\left(\sigma_{y}^{\{1\}}\right)\right)+\cdots +\mathsf{LLR}\left(\sigma_{y}^{\{M\}}\right)\right).
\end{align}
According to the general form of the probabilities of missed detection and false alarm given in (\ref{integral_in_TH2_1}), we have to sum $\text{P}_{\overline{\mathsf{LLR}}}(\ell\vert \mathcal{H}_i)$ over all possible $\ell$ from $\gamma$ to $\infty$ where only those terms in (\ref{eq65})   remain in the summation whose \emph{delta functions} are located in the following interval 
\begin{align}
\left[\mathsf{LLR}\left(\sigma_{y}^{\{1\}}\right)+\cdots+\mathsf{LLR}\left(\sigma_{y}^{\{M\}}\right)\right]>\gamma.
\end{align}
This leads to (\ref{Pfad_difinition}) in Theorem~2 and completes the proof.

\end{document}